\def\thm@space@setup{%
	\thm@preskip=2cm plus 1cm minus 2cm
	\thm@postskip=\thm@preskip 
}
\renewcommand\section{\@startsection {section}{1}{\z@}%
	{-2.2ex \@plus -1ex \@minus -.2ex}%
	{1.2ex \@plus.1ex}%
	{\normalfont\Large\bfseries}}
\renewcommand\subsection{\@startsection {subsection}{1}{\z@}%
	{-2ex \@plus -1ex \@minus -.2ex}%
	{1ex \@plus.1ex}%
	{\normalfont\large\bfseries}}
\newtheorem*{rep@theorem}{\rep@title}
\newcommand{\newreptheorem}[2]{%
	\newenvironment{rep#1}[1]{%
		\def\rep@title{#2 \ref{##1}}%
		\begin{rep@theorem}}%
		{\end{rep@theorem}}}
\newtheorem{theorem}{Theorem}[section]
\newtheorem{lemma}[theorem]{Lemma}
\newtheorem{axiom}[theorem]{Axiom}
\begin{document}
	\title{Hard Capacitated Set Cover and Uncapacitated Geometric Set Cover}
	\author{Rahil Sharma\\ \small{Department of Computer Science, The University of Iowa, Iowa City, IA 52242}\\ 
		\texttt{rahil-sharma@uiowa.edu}}
	\date{}
	\thispagestyle{empty}
	\pagenumbering{gobble}
	\maketitle
	
\section{Introduction}\label{sec:intro}
In the optimization problem, \textit{minimum set cover}, we are given a set system $(X,\cal{S})$, where $X$ is a ground set of $n$ elements
and $\cal S$ is a set of subsets of $X$. Each set $S \in \cal S$ has a cost $w(S) > 0$ associated with it.
Let $\cal R$ be a subset of $\cal S$. We say that $\cal R$ is a set cover of $X$, if each element in $X$ belongs to at least one set in $\cal R$. 
The objective is to find a set cover $\cal R$ that minimize the sum of the cost of all sets in $\cal R$. The minimum set cover problem is a classic
NP-hard problem \cite{zuckerman1993np}. In the example below,
$X = \{1, 2, 3, 4, 5, 6\}$ and $\cal{S}$ $= \{S_1, S_2, S_3, S_4\}$, where $S_1 = \{1, 2\}$, $S_2 = \{2, 3, 4\}$, $S_3 = \{4, 5, 6\}$, $S_4 = \{5, 6\}$. The costs of the 
four sets are ${1, 2, 5, 3}$ respectively.
\begin{figure}[H]
\centering
\includegraphics[scale = 0.46]{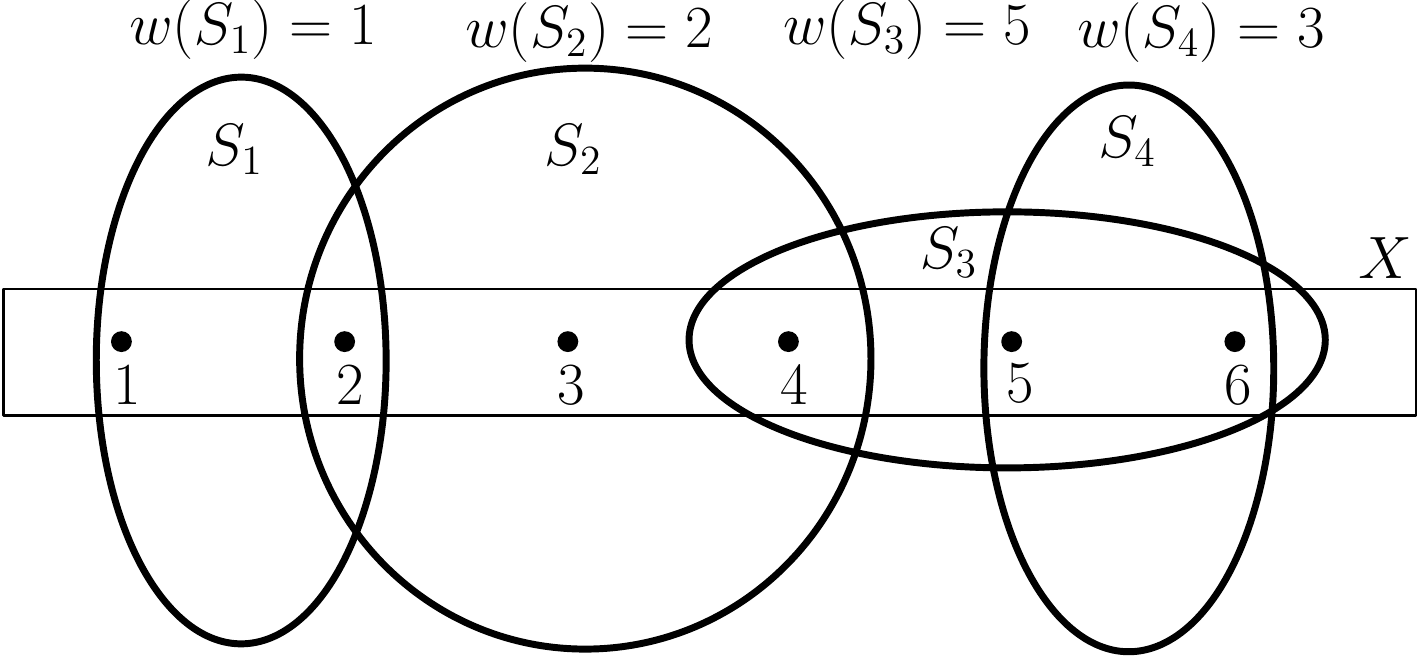}
\caption{\textit{$\cal R$ $= \{S_2, S_4\}$ is not a set cover of $X$, since element $1 \in X$ is not contained in any of the sets in $\cal R$. There are 2 possible set covers 
                 for this example, $\cal{R}$ $= \{S_1, S_2, S_3 \}$ and $\cal{R}$ $= \{S_1, S_2, S_4 \}$ of cost 8 and 6 respectively. Hence $\cal{R}$ $= \{S_1, S_2, S_4 \}$ 
                 is the minimum set cover here.}}
\end{figure}
We are interested in the capacitated version of minimum set cover problem, in which each set ${S} \in \cal S$ has a capacity $k(S)$ associated with it, such that
for each set $S \in \cal S$, at most $k(S)$ elements can be contained in $S$.
The capacitated covering problems are of 2 types. First, sets with \textit{soft capacities} where each set has unbounded number of copies that can 
be used to cover and the second, sets with \textit{hard capacities} where each set has a bound on the number of available copies.\par
Formally, in a \textbf{capacitated set cover problem with hard capacities} we are given a ground set of elements $X$ and a collection 
of its subset $\cal S$. Each $S\in \cal S$ has a positive integral capacity $k(S)$ and a non-negative cost $w(S)$ associated with it. Let $\cal R$ be a 
set of subsets of $\cal S$. Let $f: X \rightarrow \cal R$ be an assignment of elements in $X$ to a set in $\mathcal{R}$ such that for any $x\in X$, if 
(a) $f(x) = S$ then $x \in S$ (b) $|\{x\mid f(x) = S\}| \leq k(S)$ for all $S\in \cal R$. We call $\cal R$ a valid set cover if such $f$ exists. 
Cost of the solution $\cal R$ is the sum of the cost of all sets in $\cal R$.
The goal is to find a set cover $\cal R$ of minimum cost i.e. $\sum_{S\in \cal R} w(S)$.
The best known result for this problem was given by Wolsey's algorithm \cite{wolsey1982analysis}. It gave a set cover of cost $O(log \hspace{1mm} n)$ of the optimal solution. 
In the first of the two parts of this report, we show the same result using Wolsey's algorithm, but with a different and simpler analysis given in \cite{chuzhoy2002covering}.
We also make a key observation in the analysis given in \cite{chuzhoy2002covering} which allows us to apply the weighted set cover greedy algorithm's 
analysis \cite{chvatal1979greedy} to hard capacitated set cover problem. \par
One of the motivating applications is [4]; i.e. \textit{GMID (Glyco-molecule ID)} which is a chip-based technology that is used to generate 
fingerprints which uniquely identify glycomolecules. Each run of the experiment answers the question: For a given building block A, and for each member
B in a set S of building blocks, does the solution contain a molecule which contains both building blocks A and B? The size of the set S is restricted, 
because of the specific technology. Here the information is presented as a graph where the building blocks are its vertices, and an edge exists between two
vertices if the question regarding their connectivity is required. The device is able to answer $|S| = k$ questions at once if they share a 
common vertex. The problem of minimizing the number of GMID experiments needed to cover the required information graph, is precisely a capacitated 
vertex cover with hard capacities which is a special case of the set cover, with each element belongs to two sets \cite{guha2002capacitated}. It is easy to see the generalization 
of this special case to our problem where the information is provided in a multi-graph.\par
Another interesting application this problem has is in the area of \textit{cellular network coverage}. We are given $n$ users and a set $\cal S$ $= \{S_1,S_2,...,S_m\}$ 
of $m$ antennas. Each antenna has a capacity $k(S_i)$ and a positive cost of installation and maintenance $c(S_i)$, for all i where, $1 \leq i \leq m$; 
associated with it. The capacity determines the upper-bound on the number of users it can serve at a time, no matter how many users it covers. This is a real-life 
bandwidth issue, that needs to be resolved while setting up the network. The problem here is to find a minimum cost subset of antennas from 
$\cal S$, that serves all the n-users, without violating the capacity constraint. Provided that a feasible solution for the problem exists.
Here the coverage area of the antenna can be viewed as geometric objects like a disk centered at the antenna or a sector originating from an antenna. 
This drives our motivation behind exploring geometric set cover problems where the covering objects/sets are geometric bodies.
\begin{figure}[H]
\centering 
\includegraphics[scale = 0.45]{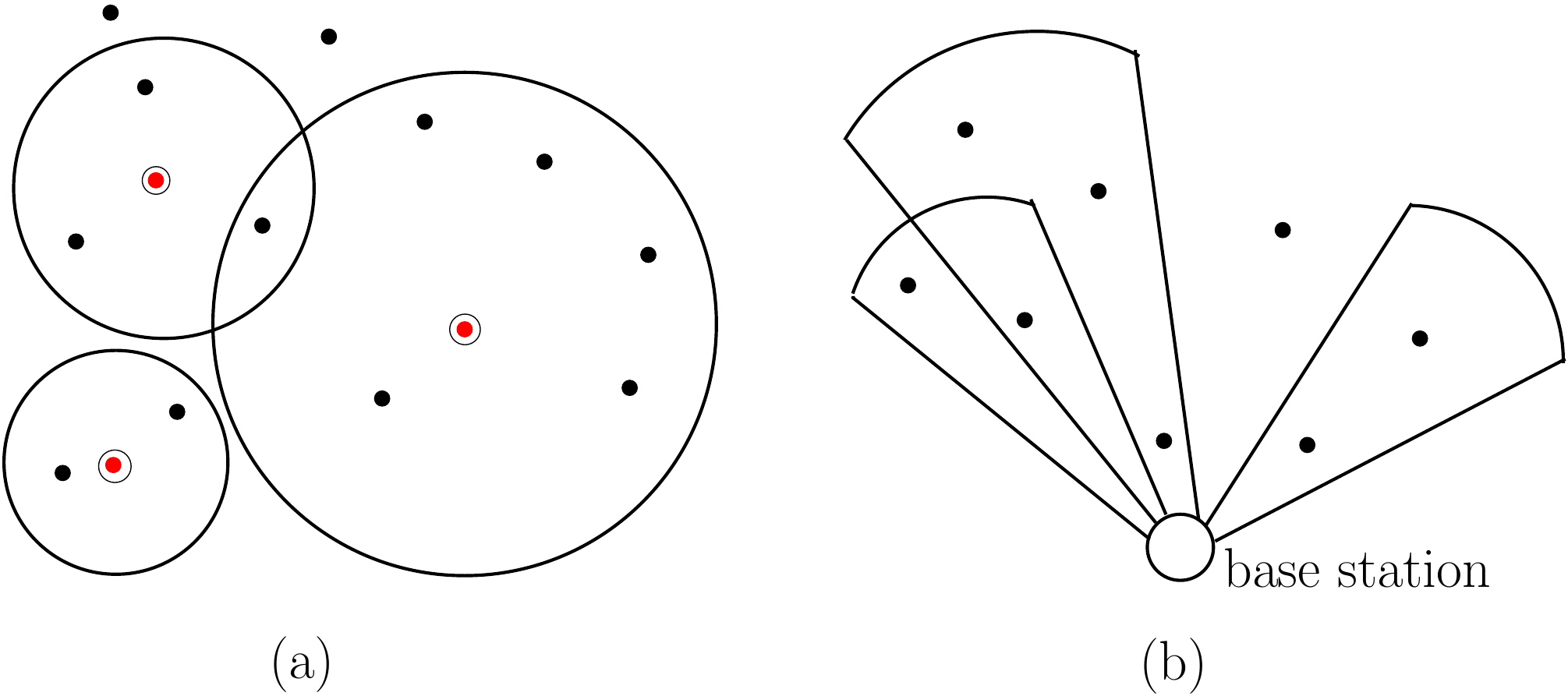}
\caption{\textit{(a) Antenna located at the center of the circle, which represents the coverage area (b) A single base station with 3 directional antennas on it and the coverage 
area resembles a sector}}
\end{figure}
\par

In the \textbf{geometric set cover problem}, we are given a ground set $X$ of $n$ points in $\Re^{2}$ and a set of objects $\cal S$ whose union contains all the points in $X$. 
The objects are geometric bodies like fat triangles, discs, rectangles, etc. The goal is to find the minimum cardinality subset $\cal R \subseteq \cal S$ such that every point 
in $X$ is contained in at least one object in $\cal R$. It is important to note that, there is another class of problems called as hitting set problem. 
In the related \textbf{hitting set problem}; the goal will be to select a minimum cardinality subfamily $X'\subseteq X$, such that every object in 
$\cal S$ contains at least one point in $X'$. \par
We can reduce the above two problems to the general set cover problem. The geometric set cover problem is reduced to the general set cover problem (see Figure 3(a)), by viewing the points in the 
plane as elements to be covered and objects as sets to cover them. The hitting set problem is reduced to general set cover problem (see Figure 3(b)), by viewing each point in the plane 
as a set and each object as an element to be covered. The geometric set cover problem is the dual problem for the geometric hitting set problem.
This is an \textit{unweighted} version of the general set cover problem, where each object in $\cal S$ has no cost/weight associated with it or they have uniform weight.

\begin{figure}[H]
\centering 
\includegraphics[scale = 0.58]{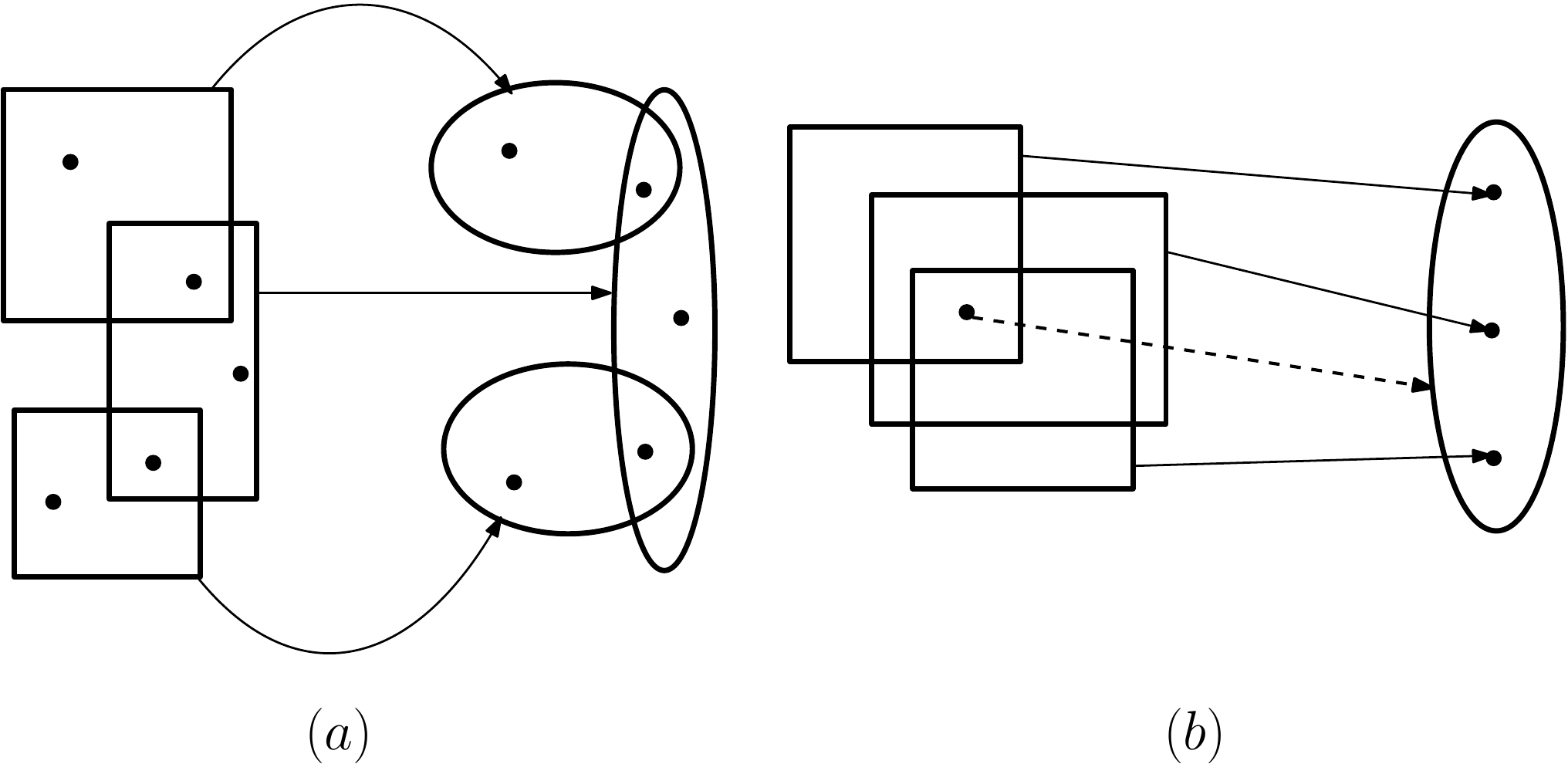}
\caption{\textit{(a) Geometric set cover instance to general set cover (b) Hitting set instance to general set cover}}
\end{figure}
\par

Let us define \textit{$\epsilon$ -net} since it is the central theme used for improving bounds on geometric set cover problems.
Let $X$ be a set of $n$ points in the plane and $\cal S$ be the set of axis parallel rectangles whose union covers $X$. For $\epsilon \in (0,1)$, the 
$\epsilon$ -net of $X$ is a set $Y$ of points in a plane, such that any rectangle $S \in \cal S$ that contains at least $\epsilon|X|$ points from $X$, contains a point from $Y$. 
One can even think of $\epsilon$ -nets as a cover for heavily covered points. 
The connection between the size of $\epsilon$ -nets and size of the hitting set is shown by Bronnimann and Goodrich \cite{bronnimann1995almost}, i.e. if for any point set
$\epsilon$ -net of size $O(1/\epsilon\hspace{1mm} h(1/\epsilon))$ is computed in polynomial time, then the corresponding geometric hitting set problem has a polynomial time 
$O(h(OPT))$ approximation, where $OPT$ is the size of the optimal cover. Haussler and Welzl \cite{haussler_epsilon-nets_1987} showed the existence 
of such nets of size $O(1/\epsilon\hspace{1mm} log\hspace{1mm}(1/\epsilon))$ for general geometric objects. 
Thus using the Bronnimann and Goodrich result, for general geometric objects we can achieve a hitting set of size $O(log \hspace{1mm} OPT)$ of
the optimal solution.  \par
 
In this report, the second problem that we will be discussing is a geometric hitting set problem where, $X$ is a ground set of points in the plane and $\cal S$ is a 
set of axis parallel rectangles \cite{aronov2010small}. Here we shall show the existence of $\epsilon$ -nets of size $O(1/\epsilon \hspace{1mm} log\hspace{1mm}log\hspace{1mm} 1/\epsilon)$.
Then applying Bronnimann and Goodrich result \cite{bronnimann1995almost} we get the hitting set of size $O(log\hspace{1mm}log \hspace{1mm} OPT)$ of the optimal solution.
This can be extended to axis-parallel boxes as ranges in 3-dimension, leading to the same result as in 2-dimension. In this report we shall just 
focus on the 2 dimensional result.

\section{Hard Capacitated Set Cover}

Recalling the problem statement; we are given a set system $(X,\cal{S})$, where $X = \{x_1,x_2,..., x_n\}$ is the ground set of $n$ elements
and $\cal S$ is the set of subsets of $X$. Each set $S \in \cal S$ has a cost $w(S) > 0$ and a positive integral capacity $k(S) > 0$ associated with it.
Let $\cal R$ be a subset of $\cal S$. Let $f: X \rightarrow \cal R$ be an assignment of elements in $X$ to a set in $\mathcal{R}$ such that for any $x\in X$, if 
(a) $f(x) = S$ then $x \in S$ (b) $|\{x\mid f(x) = S\}| \leq k(S)$ for all $S\in \cal R$. We call $\cal R$ a valid set cover if such $f$ exists.
Cost of the solution $\cal R$ is the sum of the cost of all sets in $\cal R$. The goal is to find a set cover $\cal R$ of minimum cost.
Give an instance $(X,\cal S)$ of this problem, we first need to check whether there exists a valid set cover.
So in this section, firstly we will set up a directed flow network for our instance of hard capacitated set cover problem. Using maximum-flow, 
we will check in polynomial time, whether there exists a valid set cover for the given problem instance. Secondly, we will describe Wolsey's algorithm and finally, 
we will proceed with our analysis to give a set cover of cost $O(log \hspace{1mm} n)$ of the optimal solution.
\subsection{Setting up directed flow network}
Flow networks have played an important role in design of algorithms and in other areas of computer science like network design,
computer vision, etc. Many of the covering problems are visualized as flow network problems \cite{chang2013min}, since there is a rich literature 
of network flow problems available.
In capacitated set cover problem with hard capacities there is a ground set $X = \{1,..,n\}$ of elements and a collection of its subset $\cal S$.
Each $S\in \cal S$ has a capacity $k(S)$ and 
a non-negative cost $w(S)$ associated with it. Let $\cal P$ be a  subset of sets of $\cal S$. Denote $\cal C$ $\subseteq \cal P$ $ \times X$ as a partial cover iff for 
each $(S,e) \in \cal C$, $e \in S$. We say $e\in X$ is covered by $S$ in $\cal C$ if $(S,e) \in \cal C$.
Let us give an example, $\cal P$ $=\{a, b, c\}$, where $a = \{1, 3, 4\}$, 
$b = \{2, 5, 6\}$, $c = \{3, 8, 9, 5\}$ and $X=\{ 1, 2, 4, 8\}$. If $\cal C$ $=\{(a,1), (a,8), (c,8)\}$ then it is not a valid partial cover, since $(a,8) \in \cal C$ but
$8 \notin a$. Where as $\cal C$ $=\{(a,1), (a,4), (c,8)\}$ is a valid partial cover.\par

Assume without loss of generality that each element $e\in X$ is covered at most once in $\cal C$, as we can view covering as assigning element 
$e \in X$ to set $(S,e) \in \cal C$.
Partial cover $\cal C$ is \textit{feasible} if all sets maintain their capacity constraints. Let \textit{value} of $\cal C$ be the number of elements it covers, 
denoted by $|\cal C|$. Let $f(\cal P)$ denote the maximal value of the partial cover $\cal C$ over all the feasible partial covers $\cal C$.
We show that a feasible partial cover $\cal C$ $\subseteq \cal P$ $\times X$ of value $f(\cal P)$, can be computed in polynomial time. 

\begin{lemma}
Given an instance of hard capacitated set cover problem and a subset $\cal P$ of sets of $\cal S$, a cover $\cal C$ of value $f(P)$ can be computed in polynomial time
i.e. we can establish in polynomial time, whether $\cal P$ is a valid set cover.
\end{lemma}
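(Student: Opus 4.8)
The plan is to reduce the computation of a maximum-value feasible partial cover to a single maximum-flow computation, and then to invoke the integrality of maximum flows to extract an integral cover. First I would build a directed flow network $N = (V,E)$ with a source $s$ and a sink $t$. For every set $S \in \mathcal{P}$ I introduce a node $v_S$ together with an arc $s \to v_S$ of capacity $k(S)$; for every element $e \in X$ I introduce a node $u_e$ together with an arc $u_e \to t$ of capacity $1$; and for every incidence $e \in S$ with $S \in \mathcal{P}$ I add an arc $v_S \to u_e$ of capacity $1$. All capacities are positive integers, and the number of vertices and arcs is polynomial in $n$ and $|\mathcal{P}|$.

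Second, I would establish a value-preserving correspondence between feasible partial covers $\mathcal{C} \subseteq \mathcal{P} \times X$ and integral $s$--$t$ flows in $N$. Given an integral flow $\phi$, I read off $\mathcal{C} = \{(S,e) : \phi(v_S \to u_e) = 1\}$. The capacity-$1$ arc $v_S \to u_e$ exists only when $e \in S$ and is used at most once, so every pair in $\mathcal{C}$ is legal; flow conservation at $u_e$ together with the capacity-$1$ arc $u_e \to t$ forces each element to be covered at most once, matching the assumption that $e$ is assigned to a single set; and conservation at $v_S$ together with the capacity-$k(S)$ arc $s \to v_S$ forces $|\{e : (S,e) \in \mathcal{C}\}| \le k(S)$, i.e.\ $S$ respects its hard capacity. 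Conversely, any feasible partial cover routes one unit of flow along $s \to v_S \to u_e \to t$ for each $(S,e) \in \mathcal{C}$, giving a feasible integral flow of the same value. Hence $|\mathcal{C}|$ equals the value of the flow, and $f(\mathcal{P})$ equals the maximum flow value in $N$.

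Third, since every capacity is integral, the integrality theorem guarantees a maximum $s$--$t$ flow that is itself integral, and such a flow is computable in polynomial time by any standard augmenting-path method. Extracting the saturated $v_S \to u_e$ arcs then yields a feasible partial cover $\mathcal{C}$ of value exactly $f(\mathcal{P})$. Finally, to decide validity I compare values: $\mathcal{P}$ is a valid set cover precisely when every element is covered, i.e.\ when the maximum flow value equals $|X| = n$; the required assignment $f$ is exactly the map sending each $e$ to the unique $S$ with $(S,e) \in \mathcal{C}$.

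The part requiring the most care is the correspondence argument in both directions --- in particular verifying that flow conservation genuinely encodes the two defining constraints of a feasible partial cover (each element assigned at most once, each set within its capacity) and that values are preserved exactly. Once that is in place, the polynomial-time claim follows immediately from standard max-flow machinery rather than from anything problem-specific.
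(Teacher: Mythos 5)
Your proposal is correct and follows essentially the same route as the paper's own proof: the identical bipartite flow network (source-to-set arcs of capacity $k(S)$, set-to-element arcs of capacity $1$ for each incidence, element-to-sink arcs of capacity $1$), with feasibility of $\mathcal{P}$ decided by whether the maximum flow value equals $|X|$. Your write-up is in fact somewhat more explicit than the paper's about the flow-integrality argument and the two-way correspondence between integral flows and feasible partial covers, but the underlying idea is the same.
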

\begin{proof} 
Let us construct a directed network/graph $G = (L, R, E)$ of $\cal P$ and $X$. Each set $S_i \in \cal P$, for all $1 \leq i \leq |\cal S|$,
represents a vertex in L and each element $e_j \in X$, for all $1 \leq j \leq n$ represents a vertex in R. There is an edge $(S_i, e_j) \in E$
of capacity 1 iff $e_j \in S_i$. Add a source vertex $S$ and an edge $(S, S_i)$ of capacity $k(S_i)$, for all i. Add a target vertex $T$ and 
an edge $(e_j, T)$ of capacity 1, for all j (see Figure 4).\\
Here an edge $(S_i, e_j)$ implies $e_j \in S_i$. So a flow in the network is a feasible partial cover.
Since we are using all the edges between $S_i$ and $e_j$, the value of maximum flow is at least $f(\cal P)$. In fact, value of the maximum flow in the network is exactly equal 
to $f(\cal P)$, since the value of the flow is integral. So $\cal P$ is a feasible solution to the set cover problem iff $f(\cal P)$ $= |X|$. 
Since maximum flow value can be computed in polynomial time, so can $f(\cal P)$. This completes our proof.
\end{proof}
\begin{figure}[H]
\centering
\includegraphics[scale = 0.62]{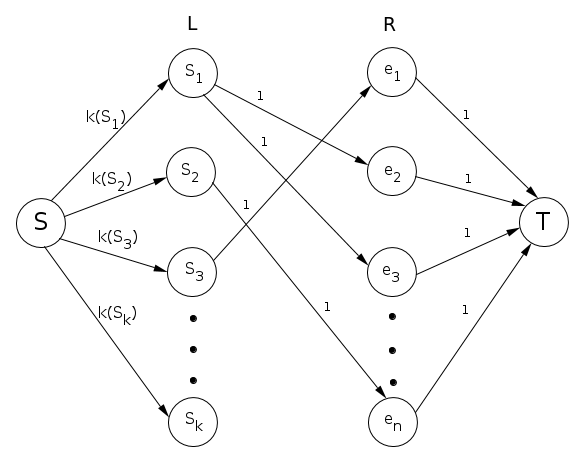}
\caption{\textit{Directed network $G = (L, R, E)$ where;  L: $S_i \in \cal P$, R: $e_j \in X$ and $E$: set of all edges}}
\end{figure}

\subsection{Wolsey's Algorithm and Analysis}

Let $\cal R$ $\subseteq \cal S$ be a family of sets.
The value of $f(\cal R)$ can be computed in polynomial time from Lemma 1. Let $f_\mathcal{R}(S) = f(\cal R$ $\cup \{S\})- f(\cal R)$, i.e. the increase in the number 
of elements covered when set $S$ is added to subset $\cal R$.  

\begin{algorithm}
 \caption{: Wolsey's greedy algorithm }
\begin{algorithmic}
\REQUIRE Feasible capacitated set cover denoted by $\cal P$
\STATE Initially, $\cal P$ $= \phi$.
\WHILE{$\cal P$ is not a feasible capacitated set cover} 
\STATE Let $S= arg \hspace{1mm} min_{S:\hspace{1mm}f_\mathcal{P}{(S)}>0} \hspace{1mm} (w(S)/f_\mathcal{P}{(S)})$  \\
\STATE Add $S$ to $\cal P$ 
\ENDWHILE
\RETURN $\cal P$
\end{algorithmic}
\end{algorithm}

It is important to note that Wolsey's greedy algorithm is quite similar to the weighted set cover greedy algorithm. In greedy algorithm for weighted
uncapacitated set cover (see Algorithm 2), we pick the set in the cover $\cal C$ to be the one that makes most progress i.e. cover the most 
uncovered elements per unit weight. If $X$ is the set of elements that are not covered yet, we add set $S_i$ to the cover $\cal C$, if it minimizes
the quantity $w_i/(S_i \cap X)$.
In this paper \cite{chuzhoy2002covering}, we make a key observation that allows us to use the same analysis as that of the weighted set cover greedy algorithm
\cite{chvatal1979greedy}, for hard capacitated set cover. In the Section 2.3 we have showed this alternative analysis.

\begin{algorithm}
 \caption{: Greedy algorithm for weighted set cover }
\begin{algorithmic}
\REQUIRE minimum set cover $\cal C$ 
\STATE Initially, let ground set of elements be $X = n$ elements, weight $w_i$ associated with each $S_i \in \cal S$ and cover $\mathcal{C} = \phi$; $X\rightarrow Y$
\WHILE{$Y\neq \phi$} 
\STATE Let $S_i$ be the set that minimizes $\dfrac{w_i}{|S_i \cap Y|}$
\STATE $\mathcal{C} = \mathcal{C} \cup \{S_i\}$
\STATE $Y = Y \setminus S_i$
\ENDWHILE
\RETURN $\cal C$
\end{algorithmic}
\end{algorithm}

Let $\cal C$ $\subseteq \cal R$ $\times X$ be a feasible partial cover. For $\cal R'$ $\subseteq \cal R$, let $f_\mathcal{C}(\cal R')$ denote the number of elements 
covered by sets of $\cal R'$ in cover $\cal C$.

\begin{lemma}
 Let $\cal R$ be a valid set cover for an instance of capacitated set cover problem with hard capacities and $\mathcal{R}_1$ $\mathcal{R}_2$ be its partition into
 2 disjoint subsets. There exists a feasible cover $\cal C$ $\subseteq \cal R$ $\times X$ such that:
 \begin{enumerate}
  \item all elements of $X$ are covered in $\cal C$.\vspace{-1.5mm}
  \item $f_\mathcal{C}(\mathcal{R}_1)$ $= f(\mathcal{R}_1)$
 \end{enumerate}
\end{lemma}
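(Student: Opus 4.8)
The plan is to produce the desired cover by an extremal argument backed by a flow-augmentation step. First I would record the easy inequality: for any feasible cover $\mathcal{C} \subseteq \mathcal{R} \times X$, the restriction of $\mathcal{C}$ to those pairs whose set lies in $\mathcal{R}_1$ is itself a feasible partial cover using only sets of $\mathcal{R}_1$, so $f_\mathcal{C}(\mathcal{R}_1) \le f(\mathcal{R}_1)$. Since $\mathcal{R}$ is a valid set cover, Lemma 1 guarantees that a feasible cover covering all of $X$ exists. Among all such covers I would select one, call it $\mathcal{C}$, that maximizes $f_\mathcal{C}(\mathcal{R}_1)$, and set $m = f_\mathcal{C}(\mathcal{R}_1)$. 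Property (1) then holds for $\mathcal{C}$ by construction, so it remains only to prove $m = f(\mathcal{R}_1)$; I would assume $m < f(\mathcal{R}_1)$ and derive a contradiction.

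Next I would work inside the flow network of Lemma 1. Let $G$ be the network built on all of $\mathcal{R}$ and let $G_1$ be the sub-network obtained by keeping only the set-vertices coming from $\mathcal{R}_1$. The $\mathcal{R}_1$-part of $\mathcal{C}$ is an integral feasible flow in $G_1$ of value $m$, while a maximum partial cover of $\mathcal{R}_1$ furnishes an integral feasible flow $\mathcal{D}$ in $G_1$ of value $f(\mathcal{R}_1) > m$. Viewing $\mathcal{D}$ inside the residual graph of the flow $\mathcal{C}|_{\mathcal{R}_1}$ and decomposing it into paths and cycles, at least one source-to-sink augmenting path $P$ must survive, precisely because the two flows have different values. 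This $P$ alternates forward and backward through $\mathcal{R}_1$-set vertices and element vertices and terminates at some element $e^\ast$ that is not covered by any $\mathcal{R}_1$-set in $\mathcal{C}$.

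Finally I would push the augmentation back into $\mathcal{C}$ to build a strictly better cover $\mathcal{C}'$. Augmenting along $P$ by one unit reassigns some elements among the sets of $\mathcal{R}_1$ (each such element stays covered by $\mathcal{R}_1$, merely by a different set) and additionally assigns $e^\ast$ to an $\mathcal{R}_1$-set that had spare capacity. Because $\mathcal{C}$ already covered all of $X$ and $e^\ast \notin \mathcal{C}|_{\mathcal{R}_1}$, the element $e^\ast$ was covered in $\mathcal{C}$ by some set of $\mathcal{R}_2$; I would simply delete that one pair. The resulting $\mathcal{C}'$ touches no $\mathcal{R}_2$-set except to remove a single assignment, so it remains feasible, it still covers every element of $X$, and it satisfies $f_{\mathcal{C}'}(\mathcal{R}_1) = m+1$, contradicting the maximality of $m$. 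Hence $m = f(\mathcal{R}_1)$ and $\mathcal{C}$ is exactly the cover claimed by the lemma. The step I expect to demand the most care is this last transfer: one must verify that rerouting along $P$, together with the single deletion of $e^\ast$'s old $\mathcal{R}_2$-assignment, yields a cover that is simultaneously feasible and complete, checking that the capacity constraints on both the $\mathcal{R}_1$- and $\mathcal{R}_2$-sets are preserved and that no previously covered element is left uncovered.
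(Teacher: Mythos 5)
Your proof is correct, but it takes a genuinely different route from the paper's. The paper argues constructively rather than by contradiction: starting from an arbitrary feasible cover $\mathcal{C}$ of all of $X$, it fixes one optimal partial cover $\mathcal{C}' \subseteq \mathcal{R}_1 \times X$ with $f_{\mathcal{C}'}(\mathcal{R}_1) = f(\mathcal{R}_1)$ and runs an explicit exchange subroutine: as long as $f_{\mathcal{C}}(\mathcal{R}_1) < f(\mathcal{R}_1)$, some $S \in \mathcal{R}_1$ covers more elements in $\mathcal{C}'$ than in $\mathcal{C}$, so an element $e$ covered by $S$ in $\mathcal{C}'$ but by some other $S'$ in $\mathcal{C}$ is reassigned from $S'$ to $S$; the comparison with $\mathcal{C}'$ guarantees the capacity of $S$ is respected, every element stays covered throughout, and agreement with $\mathcal{C}'$ grows monotonically, so the procedure stops after at most $|\mathcal{C}'|$ single-element swaps. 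You instead combine an extremal choice (a full cover maximizing $f_{\mathcal{C}}(\mathcal{R}_1)$) with max-flow machinery: the suboptimal flow $\mathcal{C}|_{\mathcal{R}_1}$ in the network $G_1$ of Lemma 1 admits an augmenting path, augmentation performs a whole chain of reassignments within $\mathcal{R}_1$ at once, and deleting the old $\mathcal{R}_2$-assignment of $e^*$ yields the contradiction. Both are exchange arguments at heart; yours buys the feasibility bookkeeping essentially for free (it is just feasibility of the augmented flow --- note the spare capacity sits at the \emph{first} set of the path, not necessarily the set that receives $e^*$, though this does not harm your conclusion) and reuses the network already built for Lemma 1, while the paper's buys an elementary, self-contained procedure with an explicit termination bound whose subroutine template is then reused almost verbatim in the proof of Lemma 3. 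One point you should make explicit: your identification of $\mathcal{C}|_{\mathcal{R}_1}$ with an integral flow of value $m$, and the claim that $e^*$ is covered in $\mathcal{C}$ by a single $\mathcal{R}_2$-set, both rely on the normalization (stated in the paper) that each element is covered at most once in a cover.
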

\begin{proof}
For a feasible cover $\mathcal{C} \subseteq \mathcal{R} \times X$, where each element $e \in X$ is covered by some set $S \in \cal R$; 
let us assume that $f_\mathcal{C}(\mathcal{R}_1) < f(\mathcal{R}_1)$. Now, consider any feasible partial cover $\mathcal{C}'\subseteq \mathcal{R}_1 \times X$ satisfying the 
condition $f_\mathcal{C'}(\mathcal{R}_1) = f(\mathcal{R}_1)$. 
Since all the elements are not covered by cover $\mathcal{C}'$, we shall use a subroutine which will gradually change the cover $\cal C$ by replacing
some of its set assignments, with the assignment in $\mathcal{C}'$ to achieve the desired cover $\cal C$, while maintaining feasibility.
\begin{algorithm}
 \caption{: Subroutine}
\begin{algorithmic}
\REQUIRE Cover $\cal{C}$ such that $f_\mathcal{C}(\mathcal{R}_1) = f(\mathcal{R}_1)$. 
\WHILE{$f_\mathcal{C}(\mathcal{R}_1) < f(\mathcal{R}_1)$} 
\STATE There exists at least one set $S \in \mathcal{R}_1$ that covers more elements in $\mathcal{C}'$ than in $\mathcal{C}$.
\STATE If $e\in X$ is one of that additional element covered by $S$ in $\mathcal{C}'$ but by some other set $S'$ in $\cal C$; 
\STATE remove $(S',e)$ and add $(S,e)$ to $\cal C$. 
\ENDWHILE
\RETURN $\cal C$
\end{algorithmic}
\end{algorithm}
When the subroutine is executed, the pair $(S,e)$ is added to $\cal C$, only if $S$ covers more elements in $\mathcal{C}'$ than in $C$, 
so the capacity constraint is maintained. Hence this procedure is carried out without violating the feasibility of cover $\cal C$.
It is easy to see that, once an assignment is made in the subroutine, it remains there till the subroutine has completed execution.
The maximum number of iterations taken by the subroutine before terminating, is bounded above by $|\mathcal{C}'|$ and after termination we get the 
desired cover $\cal C$.
\end{proof}  

\begin{figure}[H]
\centering 
\includegraphics[scale = 0.60]{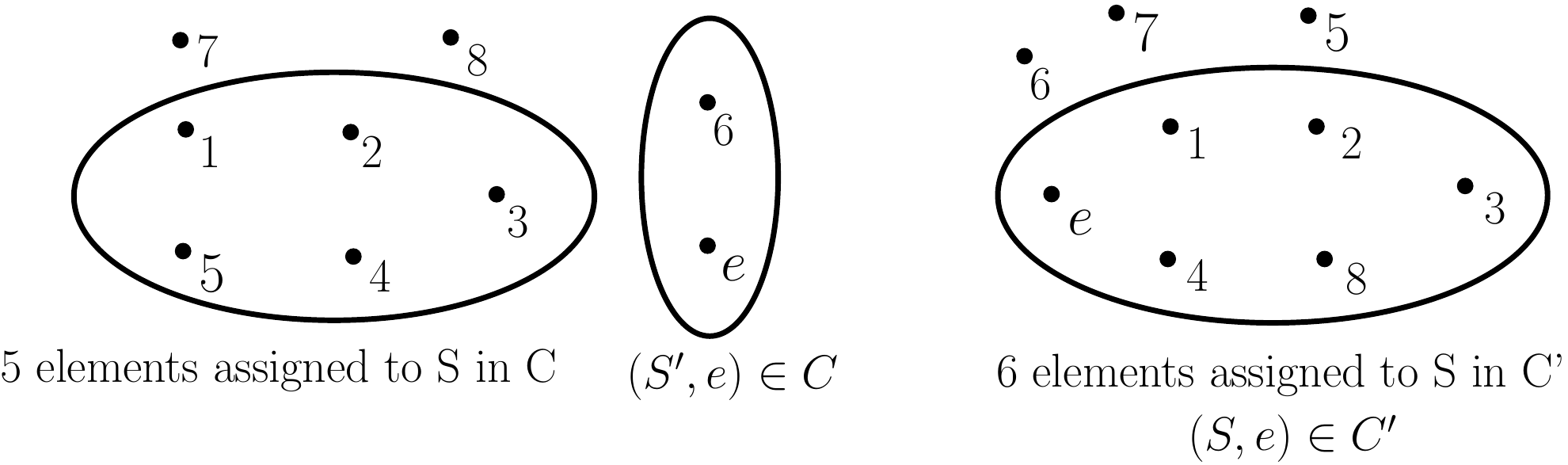}
\caption{\textit{$S$ covers more elements $(e \in E)$ in $\mathcal{C}'$ than in $\cal C$; remove assignment $(S',e)$ and add $(S,e)$ to $\cal C$}}
\end{figure}

\textit{Analysis of Wolsey's Greedy Algorithm: }Let $\mathcal{P} = \{S_1, S_2,..., S_m\}$ be the solution returned by the algorithm in the above order. 
The solution returned by the algorithm in the $i^{th}$ iteration, where $0 \leq i \leq m$, is $\mathcal{P}_i = \{S_1, S_2,...,S_i\}$. Now we re-run
the algorithm.
Initially, $\mathcal{P}_0 = \phi$ i.e. no set is added to the solution by the greedy algorithm. Now we apply Lemma 2 in this iteration, which computes
a feasible cover $\mathcal{C} \subseteq (\mathcal{P}_0 \cup OPT) \times X$ such that each element in $X$ is assigned to exactly one set in OPT.
For each set $S \in OPT$, we define $a_0(S)$ as the number of elements covered by $S$ in $OPT$. 
In the $i^{th}$ iteration $S_i$ is added to the solution by the greedy algorithm i.e. $ P_i = \{S_1, S_2,.., S_i\}$. Now we apply Lemma 2 in this iteration, 
which computes a feasible cover $\mathcal{C} \subseteq (\mathcal{P}_i \cup OPT) \times X$. 
For each $S \in OPT\setminus \mathcal{P}_i$, we define $a_i(S)$ to be the number of elements assigned to a set $S$ in $\mathcal{C}$ where, each set $S$ has capacity of at most
the number of elements assigned to that set in $\mathcal{C} \subseteq (\mathcal{P}_{i-1} \cup OPT) \times X$ i.e. $a_{i-1}(S)$. 
For the rest of the analysis the following statement remains constant and we show this inductively: \textit{all the elements can be covered by 
the sets in $OPT \cup \mathcal{P}_i$, even when the sets $S \in OPT\setminus \mathcal{P}_i$ has a capacity constraint, of at most $a_i(S)$.}\par
Here base case holds for $\mathcal{P}_0$ and $a_0$. Assume that, all the elements can be covered by the sets in $OPT \cup \mathcal{P}_{i-1}$, even when the sets
$S \in OPT\setminus \mathcal{P}_{i-1}$ has a capacity constraint of at most $a_{i-1}(S)$. Since $OPT\setminus \mathcal{P}_{i} \subseteq OPT\setminus \mathcal{P}_{i-1}$, 
it is safe to say that $OPT \cup \mathcal{P}_{i}$ is a feasible cover, even when the sets $S \in OPT\setminus \mathcal{P}_{i}$ has a capacity constraint of at most 
$a_{i-1}(S)$. From Lemma 2, there exists a feasible cover $\mathcal{C} \subseteq (OPT \cup \mathcal{P}_i) \times X $, where $f_\mathcal{C}(\mathcal{P}_i) = f(\mathcal{P}_i)$ 
and each $S \in OPT\setminus \mathcal{P}_i$ covers at most $a_{i-1}(S)$ elements. Since a set $S$ covers same or more number of elements in $OPT\setminus \mathcal{P}_{i-1}$ 
than in $OPT\setminus \mathcal{P}_{i}$, so it can cover same or more number of elements in $\cal C$ when $S \in OPT\setminus \mathcal{P}_{i-1}$ as to when
$S \in OPT\setminus \mathcal{P}_{i}$. 
Hence $a_{i-1}(S) \geq a_{i}(S)$.\par
\textit{Charging: }Let $S_i$ be the set added to the solution by the greedy algorithm in the $i^{th}$ iteration. If the added set $S_i \in OPT$, 
then we don't charge any set in $OPT$, since $OPT$ pays for the same as well. If $S_i \notin OPT$, let the maximum possible increase in the number of
elements that can be covered in $\cal C$, when $S_i$ is added to $\cal P$ after the $(i-1)^{th}$ iteration be denoted by $f_{\mathcal{P}_{i-1}}(S_i) = n_i$. 
Summing over all $S\in OPT\setminus \mathcal{P}_i$, the difference between maximum capacity of a set $S$ i.e. $a_{i-1}(S)$ and the number of elements actually covered 
by $S$ in $\cal C$ i.e. $a_{i}(S)$, gives the maximum possible increase in $\cal C$ on adding $S_i$ i.e. $n_i$. Mathematically;
\begin{equation}
 \sum_{S \in OPT\setminus \mathcal{P}_i}[a_{i-1}(S) - a_{i}(S)] = n_i 
\end{equation}
We charge a cost of $(w(S_i)/n_i)\times[a_{i-1}(S) - a_{i}(S)]$ to each $S \in OPT\setminus \mathcal{P}_i$. Using equation 1, we can see that in $i^{th}$ iteration 
the total cost charged to $S_{i} \in OPT$ is precisely $w(S_i)$.\par   
\textit{Bounding the cost charged to each set in $OPT$: }If the set $S\in OPT$ is added to the solution $\cal P$ by the greedy algorithm, let $(j+1)$ be 
the iteration at which this addition occurs. If the set $S \in OPT$ is not included in $\cal P$, let $(j+1)$ be the first iteration at which $a_j(S) = 0$. 
For each of the iterations $i < j$, the maximum possible increase in the number of elements covered in $\cal C$ at the beginning of the $i^{th}$ iteration
is more than the maximum capacity of $S$ i.e. $f_{\mathcal{P}_{i-1}}(S) \geq a_{i-1}(S)$, otherwise $i$ would have been the last iteration and not $j$.
Since $S \notin \mathcal{P}$, i.e. not selected by the greedy algorithm in this iteration, hence 
$\dfrac{w(S_i)}{n_i} \leq \dfrac{w(S)}{f_{\mathcal{P}_{i-1}}(S)} \leq \dfrac{w(S)}{a_{i-1}(S)}$. 
Here the first inequality follows from the condition in the greedy algorithm and the second follows from above. Total cost charged to $S$ is

\begin{equation}
  \sum_{i=1}^j [a_{i-1}(S) - a_i(S)] \dfrac{w(S_i)}{n_i} \leq w(S)\sum_{i=1}^j\dfrac{a_{i-1}(S) - a_i(S)}{a_{i-1}(S)}
 \end{equation}
 $\forall$ $i$, $1 \leq i\leq j$; add $\dfrac{1}{a_{i-1}(S)}$ for $[a_{i-1}(S) - a_i(S)]$ times and $l\leq a_{i-1}(S)$ implies $\dfrac{1}{l} \geq \dfrac{1}{a_{i-1}(S)}$ 
\begin{equation}
 \sum_{l=a_i(S)+1}^{a_{i-1}(S)} \dfrac{1}{a_{i-1}(S)} \leq \sum_{l=a_i(S)+1}^{a_{i-1}(S)} \dfrac{1}{l}.
\end{equation}
Substituting (3) in (2) we see that, the value charged to $S$ is bounded above by $w(S)H(S)$. Hence, total cost of the solution is $[OPT + OPT.ln(max_s|S|)]$.

\begin{theorem}
 There exists a greedy algorithm for set cover problem with hard capacities, that gives a solution of cost $O(log \hspace{1mm} n)$ of that of the optimal solution.  
\end{theorem}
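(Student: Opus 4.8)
The plan is to show that Wolsey's greedy algorithm (Algorithm 1) runs in polynomial time and returns a valid set cover whose cost is within a factor $O(\log n)$ of $OPT$. First I would invoke Lemma 1 to guarantee that for any intermediate family $\mathcal{P}$ the marginal value $f_\mathcal{P}(S) = f(\mathcal{P} \cup \{S\}) - f(\mathcal{P})$ is computable in polynomial time via a single maximum-flow computation; this makes each greedy step, and hence the whole loop (which adds at least one newly covered element per round, so terminates in at most $n$ iterations), polynomial. Feasibility of the returned family follows because the loop halts exactly when $f(\mathcal{P}) = |X|$, which by Lemma 1 is the condition for $\mathcal{P}$ to be a valid cover.

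The heart of the argument is the charging scheme set up above. I would fix an optimal cover $OPT$ and, for each iteration $i$, apply Lemma 2 to produce the feasible cover $\mathcal{C} \subseteq (\mathcal{P}_i \cup OPT) \times X$ that covers all of $X$ while preserving $f_\mathcal{C}(\mathcal{P}_i) = f(\mathcal{P}_i)$. This defines the residual capacities $a_i(S)$ for $S \in OPT \setminus \mathcal{P}_i$. The two structural facts I rely on are the inductive invariant, namely that $OPT \cup \mathcal{P}_i$ still covers $X$ when each $S \in OPT \setminus \mathcal{P}_i$ is capped at $a_i(S)$, together with its immediate consequence, the monotonicity $a_{i-1}(S) \geq a_i(S)$. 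With these in hand, equation (1) lets me distribute the full cost $w(S_i)$ of each greedily chosen set $S_i \notin OPT$ across $OPT \setminus \mathcal{P}_i$ by assigning $(w(S_i)/n_i)[a_{i-1}(S) - a_i(S)]$ to each $S$, while any $S_i \in OPT$ is simply charged to itself.

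Next I would bound the total charge landing on a fixed $S \in OPT$. Letting $j+1$ be the iteration at which $S$ is either selected or first reaches $a_j(S) = 0$, the greedy selection rule gives $w(S_i)/n_i \le w(S)/f_{\mathcal{P}_{i-1}}(S) \le w(S)/a_{i-1}(S)$ for every $i \le j$, where the second inequality is exactly the observation that otherwise $i$, and not $j$, would have been the terminal iteration. Summing these charges and telescoping through the harmonic comparison in equations (2) and (3) yields a total charge of at most $w(S)\,H(a_0(S))$, where $H$ denotes the harmonic number. Summing over all $S \in OPT$ and using $a_0(S) \le |S| \le n$ gives $\sum_{S \in OPT} w(S)\,H(a_0(S)) \le H(n)\cdot OPT = O(\log n)\cdot OPT$, which is the claimed approximation guarantee.

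The step I expect to be the main obstacle is verifying the inequality $f_{\mathcal{P}_{i-1}}(S) \ge a_{i-1}(S)$ that underpins the per-set bound, since it requires arguing that the max-flow marginal gain of adding $S$ to the current greedy solution dominates the reduced capacity $S$ carries in the Lemma 2 cover. This in turn rests on the inductive invariant being maintained through every iteration, so the care lies in checking that the successive capacity reductions $a_i(S)$ interact correctly with the flow network and never cause the residual family $OPT \cup \mathcal{P}_i$ to fail to cover $X$. Once that invariant is secured, the remainder reduces to the standard harmonic-sum accounting familiar from the analysis of the greedy weighted set cover algorithm.
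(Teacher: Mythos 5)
Your proposal is correct and follows essentially the same route as the paper: it uses Lemma 2 to define the residual capacities $a_i(S)$, establishes the inductive invariant and the monotonicity $a_{i-1}(S) \geq a_i(S)$, distributes each $w(S_i)$ via the charging identity of equation (1), and bounds the per-set charge by $w(S)\,H(a_0(S))$ through the greedy inequality $w(S_i)/n_i \leq w(S)/f_{\mathcal{P}_{i-1}}(S) \leq w(S)/a_{i-1}(S)$ and the harmonic-sum comparison of equations (2)--(3). The only additions beyond the paper's text are your explicit remarks on polynomial running time and termination, which are compatible with (and implicit in) the paper's argument.
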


\subsection{Alternative Analysis for Wolsey's Greedy Algorithm}

In this paper \cite{chuzhoy2002covering}, we make a key observation in their analysis of Wolsey's algorithm for hard capacitated set cover problem, 
that in each iteration of the algorithm there exist an ordering in which the elements of the ground set $X$ are covered. This observation allows us to apply the same
analysis as the weighted set cover greedy algorithm \cite{chvatal1979greedy} to hard capacitated set cover. The observation is stated in the following lemma.

\begin{lemma}
Consider an instance of the set cover problem with hard capacities and let $\mathcal{P}$ be a valid set cover 
and $\mathcal{C} \subseteq \mathcal{R} \times X$ be a feasible cover for this instance. There exists an ordering $\{x_1,x_2,...,x_n\}$ in 
which the greedy algorithm covers elements in the ground set $X$, such that for all $i>0$,\\
(a) the union of the sets in $P_i$ covers $f_{\mathcal{C}}(P_i) = f(P_i)$ elements and\\
(b) they are indexed based on the order they are covered i.e. $\{x_1, x_2,.., x_{f_{\mathcal{P}_i}}\}$. 

\end{lemma}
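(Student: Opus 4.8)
The plan is to reduce the statement to the existence of a \emph{single} feasible cover $\mathcal{C}$ of all of $X$ that is simultaneously tight on every prefix of the greedy solution, and then to read the ordering off of $\mathcal{C}$. Writing $\mathcal{P} = \{S_1, \dots, S_m\}$ and $P_i = \{S_1, \dots, S_i\}$, the first thing I would record is monotonicity: since any feasible partial cover using $P_{i-1}$ is also one using $P_i$, we have $f(P_0) = 0 \le f(P_1) \le \cdots \le f(P_m) = n$. The key reduction is then: if I can produce one cover $\mathcal{C} \subseteq \mathcal{P} \times X$ covering all of $X$ with $f_{\mathcal{C}}(P_i) = f(P_i)$ for every $i$, I am done. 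Indeed, in a single $\mathcal{C}$ the elements assigned to sets of $P_{i-1}$ are automatically among those assigned to sets of $P_i$ (a set of $P_{i-1}$ lies in $P_i$), so the covered sets $B_i := \{e : (S,e)\in\mathcal{C}\text{ for some }S\in P_i\}$ form a chain $B_1 \subseteq B_2 \subseteq \cdots \subseteq B_m = X$ with $|B_i| = f_{\mathcal{C}}(P_i) = f(P_i)$. Listing $B_1$ first, then $B_2 \setminus B_1$, then $B_3 \setminus B_2$, and so on, yields an order $\{x_1, \dots, x_n\}$ with $\{x_1, \dots, x_{f(P_i)}\} = B_i$, which is exactly (a) and (b).

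To build such a $\mathcal{C}$ I would work inside the flow network of Lemma 1 and grow the solution in the greedy order of the sets. Let $G_i$ be that network, but with the source edge into $S_l$ (capacity $k(S_l)$) present only for $l \le i$; a maximum integral flow in $G_i$ has value $f(P_i)$ and corresponds to a feasible cover using $P_i$. Starting from the empty flow, I process $l = 1, 2, \dots, m$ and, after admitting the source edge into $S_l$, augment the current flow up to a maximum flow of $G_l$ along augmenting source-to-sink paths. The invariant I want to carry is that after stage $i$ the current flow $F$ is maximum in $G_i$ and satisfies $f_F(P_j) = f(P_j)$ for all $j \le i$, where $f_F(P_j)$ denotes the total source flow entering $\{S_1, \dots, S_j\}$, equivalently the number of elements covered by $P_j$ under $F$.

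The crucial observation that makes the invariant survive concerns a single augmentation. Because an augmenting path is a simple path from the source to the sink, it leaves the source exactly once, hence changes the flow on exactly one source edge $(s, S_{l_0})$, by $+1$; so augmenting along it raises $f_F(P_j)$ by one for every $j \ge l_0$ and leaves $f_F(P_j)$ unchanged for every $j < l_0$. In particular a single augmentation can never decrease a prefix count. Moreover, at stage $i$ no augmenting path can begin at a source edge $(s, S_{l_0})$ with $l_0 < i$: such a path would push $f_F(P_{l_0})$ to $f(P_{l_0}) + 1$, contradicting that $f_F(P_{l_0})$, being the value of a feasible partial cover on $P_{l_0}$, is at most $f(P_{l_0})$, which the invariant already attains. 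Thus every augmentation at stage $i$ starts at $(s, S_i)$, increases $f_F(P_i)$, and preserves $f_F(P_j)$ for $j < i$; when $G_i$ reaches its maximum we get $f_F(P_i) = f(P_i)$ with the earlier equalities untouched. By induction the final flow furnishes the desired $\mathcal{C}$.

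The step I expect to be the main obstacle is exactly guaranteeing that making a late prefix $P_i$ tight does not destroy an earlier one: the reassignment freedom of flows looks as though it could reroute an element out of $P_{i-1}$ into $S_i$. The simple-path argument above, run in greedy (lexicographic) order, is what rules this out, leaning on two routine facts for the unit-capacity element edges of the Lemma 1 network, which I would verify explicitly: that augmenting paths may be taken simple, and that the flow is integral so that each augmentation moves a single unit and the per-set counts stay integers. (One could instead iterate the exchange subroutine of Lemma 2 to enforce tightness prefix by prefix, but the flow picture is what makes the prefix-preservation transparent.)
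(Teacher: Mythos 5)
Your proposal is correct, but it takes a genuinely different route from the paper. The paper proves the lemma by induction over the greedy iterations, reusing the exchange subroutine of Lemma~2: at stage $i+1$ it takes a tight partial cover $\mathcal{C}'$ of $\mathcal{P}_{i+1}$ and repeatedly reassigns elements (remove $(S',e)$, add $(S,e)$ for $S\in\mathcal{P}_{i+1}$ covering more in $\mathcal{C}'$ than in $\mathcal{C}$), arguing that elements once covered stay covered, which yields the ordering. You instead reduce the lemma to producing a single cover that is simultaneously tight on every greedy prefix, and you build it by incremental maximum flow in the Lemma~1 network, admitting source edges in greedy order; your key structural point --- a simple augmenting path leaves the source exactly once, so at stage $i$ every augmentation must start at $(s,S_i)$ or it would push some earlier prefix above its maximum $f(P_{l_0})$ --- is what guarantees that making $P_i$ tight never disturbs $P_j$ for $j<i$. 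This is precisely the subtlety the paper's exchange argument glosses over: in the paper's subroutine a reassignment can, in principle, move an element from a set of $\mathcal{P}_i$ to $S_{i+1}$, so tightness of earlier prefixes under the final cover is asserted rather than carefully maintained, and termination of the while-loop is also left informal. What the paper's approach buys is elementarity and direct reuse of Lemma~2's machinery; what yours buys is a transparent, rigorous invariant (prefix counts are monotone under augmentation and can only grow on the newest set), at the cost of invoking max-flow theory --- though that cost is small here, since Lemma~1 already sets up exactly this network, and integrality plus simple augmenting paths are standard facts you correctly flag for explicit verification.
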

\begin{proof}
  Let $\mathcal{P} = \{S_1, S_2,..., S_m\}$ be the solution returned by the algorithm in the above order. 
  The solution returned by the algorithm in the $i^{th}$ iteration, where $0 \leq i \leq m$, is $\mathcal{P}_i = \{S_1, S_2,...,S_i\}$.
  Let $\mathcal{C} \subseteq \mathcal{P} \times X$ be a feasible cover where each element is covered by some set $S \in \mathcal{P}$. 
  Without loss of generality assume that no element is covered by more than one set in $\mathcal{P}$. 
  Now we shall prove the lemma using an inductive argument. 
  In the first iteration of the greedy algorithm when the set $S_1$ is added to the solution, denoted by $\mathcal{P}_1$, it covers 
  $f_{\mathcal{C}}(\mathcal{P}_1) = f{(\mathcal{P}_1)}$ elements and they are indexed based on the order they are covered i.e. 
  $\{x_1, x_2,.., x_{f_{\mathcal{P}_1}}\}$.
  Assume that in the $i^{th}$ iteration of the greedy algorithm, when the set $S_i$ is added to the solution denoted by $\mathcal{P}_i$, 
  it covers $f_{\mathcal{C}}(\mathcal{P}_i) = f{(\mathcal{P}_i)}$ elements and they are indexed based on the order they are covered i.e. 
  $\{x_1, x_2,..., x_{f_{\mathcal{P}_{i-1}}}, x_{f_{\mathcal{P}_{(i-1)} + 1}}, x_{f_{\mathcal{P}_{(i-1)} + 2}},...,x_{f_{\mathcal{P}_{i}}}\}$.
  Now let us prove the induction hypothesis that, in the $(i+1)^{th}$ iteration of the greedy algorithm, when the set $S_{i+1}$ is added to the 
  solution, denoted by $\mathcal{P}_{i+1}$, it covers $f_{\mathcal{C}}(\mathcal{P}_{i+1}) = f{(\mathcal{P}_{i+1})}$ elements and they are indexed
  based on the order they are covered i.e. $\{x_1, x_2,.., x_{f_{\mathcal{P}_{i}}},.., x_{f_{\mathcal{P}_{i+1}}}\}$ elements. 
  For proving this, assume that $f_{\mathcal{C}}(\mathcal{P}_{i+1}) < f{(\mathcal{P}_{i+1})}$. Let $\mathcal{C}' \subseteq \mathcal{P} \times X$ 
  be a feasible partial cover, such that $f_{\mathcal{C}'}(\mathcal{P}_{i+1}) = f{(\mathcal{P}_{i+1})}$. Some elements may not be covered by 
  $\mathcal{C}'$, since it is a partial cover. While maintaining the feasibility, we will gradually change the cover $\mathcal{C}$
  using the subroutine given below, until the lemma is satisfied. 
\begin{algorithm}
 \caption{: Subroutine}
\begin{algorithmic}
\REQUIRE Cover $\cal{C}$ such that $f_\mathcal{C}(\mathcal{P}_{i+1}) = f(\mathcal{P}_{i+1})$. 
\WHILE{$f_\mathcal{C}(\mathcal{P}_{i+1}) < f(\mathcal{P}_{i+1})$} 
\STATE There exists at least one set $S \in \mathcal{P}_{i+1}$ that covers more elements in $\mathcal{C}'$ than in $\mathcal{C}$.
\STATE If $e\in X$ is one of that additional element covered by $S$ in $\mathcal{C}'$ but by some other set $S'$ in $\cal C$; 
\STATE remove $(S',e)$ and add $(S,e)$ to $\cal C$. 
\STATE If $e\in X$ is not assigned to any set in $\mathcal{C}$ but assigned to some set $S''\in C'$; add $(S'',e)$
\ENDWHILE
\RETURN $\cal C$
\end{algorithmic}
\end{algorithm}

In the above sub-routine we can see that, for all $i>0$, any element that is already covered by assigning it to a set in the $i^{th}$ iteration, 
remains covered in the $(i+1)^{th}$ iteration as well. Only its assignment to a particular set may change.
The union of the sets in $\mathcal{P}_{i+1}$ cover at least the elements covered by union of the sets in $\mathcal{P}_{i}$ i.e. $f({\mathcal{P}_i})$ elements. 
Any element $e\in X$ that is covered by the union of the sets in $\mathcal{P}_{i+1}$ is (a) either covered in the previous iteration i.e. by union of the sets in
$\mathcal{P}_{i}$, in which case, from the inductive step we can say that, it has already been indexed within $\{x_1, x_2,.., x_{f_{\mathcal{P}_{i}}}\}$ 
(b) or the element $e$ is newly covered in $(i+1)^{th}$ iteration and it is indexed based on the order it is covered i.e. 
$\{x_{f_{(\mathcal{P}_{i})}+1}, x_{f_{(\mathcal{P}_{i})}+2},..., x_{f_{(\mathcal{P}_{i+1})}}\}$. This completes the inductive argument.
Once an assignment is made in the subroutine, it remains there till the subroutine has completed execution.
The maximum number of iterations taken by the subroutine before terminating, is bounded above by $|\mathcal{C}'|$ and after termination we get $P_i$, 
for all $i>0$ that satisfies the lemma. 
\end{proof}
\par
\textit{Alternative Analysis of Wolsey's Greedy Algorithm:} Let every element in $X$ be covered by exactly one set in optimal set cover denoted by OPT. 
 When the greedy algorithm chooses a set $S_{i}$ in the $i^{th}$ iteration, let it charge the price per element for that iteration, to each of the newly 
 covered elements i.e.$f(P_{i}) - f(P_{i-1})$. Total weight of the sets chosen by the algorithm equals the total amount charged. Each element is charged once.
 Consider a set $S' \in OPT$ having $k < n$ elements from $X$, assigned to it in the optimal cover. Let the greedy algorithm cover the elements of $S'$ in the order: 
 $y_k, y_{k-1},.., y_1$. At the beginning of an iteration in which the algorithm covers element $y_i$ of $S'$, at least $i-1$ elements of $S'$ 
 remains uncovered. If the greedy algorithm chooses the set $S'$ to be added to the solution in this iteration, it would pay a cost per element of 
 at most $w_{S'}/i-1$. Thus in this iteration greedy algorithm pays at most $w_{S'}/i-1$ per element covered. Hence element $y_i$ pays at most 
 $w_{S'}/i-1$ to be covered. Now, summing over all the elements in $S'$, the total amount charged to the elements in $S'$ is at most
 $\sum_{i=1}^{k}w_{S'}/i-1$ i.e. $w(S')H_{k-1}$. Now we sum over all the sets $S'\in OPT$ and noting the fact that every element 
 in $X$ is in some set in OPT, the total amount charged to elements overall is at most $\sum_{S' \in OPT}w_S'H_{k-1} = H_n.OPT $  

\begin{theorem}
  For hard capacitated set cover problem, the Wolsey's greedy algorithm returns a set cover of cost at most $H_n$ times of that of the optimal set cover.
\end{theorem}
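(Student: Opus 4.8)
The plan is to run the standard Chv\'atal-style charging argument on the output of the greedy algorithm, using the ordering lemma established above as the single ingredient that is not automatic in the capacitated setting. First I would fix an optimal cover $OPT$ together with a feasible assignment in which every element of $X$ is assigned to exactly one set of $OPT$, and I would apply the ordering lemma to the greedy run to obtain an ordering $\{x_1,\dots,x_n\}$ of the ground set such that $\mathcal{P}_i$ covers precisely the prefix $\{x_1,\dots,x_{f(\mathcal{P}_i)}\}$ and $f_{\mathcal{C}}(\mathcal{P}_i)=f(\mathcal{P}_i)$ for all $i$. This fixes, once and for all, the iteration in which each element is first covered, so that the charging below is well defined.

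Next I would set up the charge. When greedy adds $S_i$ in iteration $i$, it newly covers $f(\mathcal{P}_i)-f(\mathcal{P}_{i-1})$ elements at a price per element of $w(S_i)/\bigl(f(\mathcal{P}_i)-f(\mathcal{P}_{i-1})\bigr)$, and I charge this price to each of those newly covered elements. Summing the charge iteration by iteration recovers $w(S_i)$ each time, so the total charge over all elements of $X$ equals the total weight of the returned cover; moreover, by the ordering each element is charged exactly once. Hence it suffices to upper bound the total charge, which I would do by grouping the elements according to the optimal set to which they are assigned.

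The heart of the argument is to bound the charge accumulated by the elements assigned to a single set $S'\in OPT$. Suppose $S'$ holds $k$ elements in the optimal assignment, and that greedy covers them in the order $y_k,y_{k-1},\dots,y_1$. When greedy covers $y_j$, the elements $y_j,\dots,y_1$ of $S'$ are all still uncovered at the start of that iteration, so adding $S'$ to the then-current partial solution would raise the maximal partial cover by at least $j$; that is, the marginal gain of $S'$ satisfies $f_{\mathcal{P}}(S')\ge j$ at that moment. Thus $S'$ is an eligible candidate with price per element at most $w(S')/j$, and since greedy selects the minimizer of price per element, $y_j$ is charged at most $w(S')/j$. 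Summing over $j=1,\dots,k$ gives total charge on the elements of $S'$ at most $w(S')\,H_k\le w(S')\,H_n$, and summing over all $S'\in OPT$ (each element of $X$ lying in some optimal set) bounds the total charge, hence the greedy cost, by $H_n\cdot OPT$.

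The step I expect to be the main obstacle is the inequality $f_{\mathcal{P}}(S')\ge j$. In the uncapacitated case the marginal coverage of $S'$ is simply the count of its still-uncovered elements, but here $f_{\mathcal{P}}(\cdot)$ is a max-flow quantity, and one must verify that the flow network actually admits routing all the remaining $y_j,\dots,y_1$ through $S'$ simultaneously. This is exactly what the consistent cover produced by the ordering lemma guarantees: in that cover the uncovered $y$'s are assigned to $S'$ and are not yet carried by any set of the current greedy solution, so $S'$ retains the capacity and the flow augmentation needed to absorb them. Making this bridge between the combinatorial ordering and the flow-based marginal-gain function is the crux that converts the classical Chv\'atal analysis into a valid proof for hard capacities.
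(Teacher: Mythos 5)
Your proposal follows essentially the same route as the paper's own proof of this theorem (the ``alternative analysis'' of Section 2.3): fix a static assignment of each element to one optimal set, use the ordering lemma to make ``the iteration in which an element is first covered'' well defined, charge greedy's price per element to newly covered elements, and bound the charge received by the elements of each $S'\in OPT$ through the eligibility of $S'$. You are in fact more explicit than the paper about the crux, the inequality $f_{\mathcal{P}_{i-1}}(S')\ge j$. But there is a genuine gap, and it sits exactly at that inequality: it fails when $S'$ already belongs to $\mathcal{P}_{i-1}$. With hard capacities, greedy can add $S'$ and have its maximum-flow cover fill $S'$ with elements \emph{other} than the ones $OPT$ assigns to it; from then on $f_{\mathcal{P}_{i-1}}(S')=f(\mathcal{P}_{i-1}\cup\{S'\})-f(\mathcal{P}_{i-1})=0$, so $S'$ is not an eligible candidate at all, and the comparison ``greedy's price $\le w(S')/j$'' has no basis. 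Your justification --- that $S'$ ``retains the capacity \dots needed to absorb'' the uncovered $y$'s --- is precisely what is false in this case: $S'$ is saturated in greedy's maximal partial cover. Concretely, take $X=\{a,b\}$, $S_1=\{a,b\}$ with $k(S_1)=1$, $w(S_1)=1$, $S_2=\{a\}$ with $k(S_2)=1$, $w(S_2)=10$, and $S_3=\{b\}$ with $k(S_3)=1$, $w(S_3)=20$. Then $OPT=\{S_1,S_2\}$ with the unique assignment $b\mapsto S_1$, $a\mapsto S_2$. Greedy first takes $S_1$ (price $1$), and its maximal partial cover may legitimately assign $a$ to $S_1$; the ordering lemma does not forbid this choice. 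In the next iteration greedy takes $S_2$, and element $b$ is charged $10$, although your bound promises that $b$, being the last uncovered element of $S_1$, is charged at most $w(S_1)/1=1$. (The theorem itself survives here --- greedy's cost $11$ equals $OPT$ --- but the charging step breaks.) This case simply cannot arise in the uncapacitated problem, which is why Chv\'atal's argument ports so smoothly there and why hard capacities require more than the ordering lemma.

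A second, related misstep: the ordering lemma's cover $\mathcal{C}$ involves only the greedy sets, so the uncovered $y$'s are not ``assigned to $S'$ in that cover''; they are assigned to nothing yet, and their association with $S'$ lives in the separate, static optimal assignment. When $S'\notin\mathcal{P}_{i-1}$ the bridge is easy and needs no such conflation: the $j$ uncovered elements all lie in $S'$, $j\le k\le k(S')$, and $S'$ carries nothing in greedy's maximal partial cover, so adding the pairs $(S',y_l)$ for $1\le l\le j$ yields a feasible partial cover of $\mathcal{P}_{i-1}\cup\{S'\}$ of value $f(\mathcal{P}_{i-1})+j$, giving $f_{\mathcal{P}_{i-1}}(S')\ge j$. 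No such repair exists once $S'$ has entered the greedy solution. To be fair, the paper's Section 2.3 write-up glosses over the very same case; the analysis in the paper that actually handles it is the one in Section 2.2, which charges \emph{sets} rather than elements, uses the dynamically shrinking quantities $a_i(S)$ defined through the covers of $\mathcal{P}_i\cup OPT$ (so that the ``ownership'' of elements by optimal sets is allowed to change from iteration to iteration), charges only sets in $OPT\setminus\mathcal{P}_i$, and lets $OPT$ pay outright for its own sets whenever greedy selects them. If you want to keep an element-based charging scheme, you must make each element's owner dynamic in the same sense --- an element counts toward $S'$ only while $S'\notin\mathcal{P}_i$ and the current cover assigns it to $S'$ --- and after telescoping this reproduces exactly that bookkeeping.
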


\section{Geometric Set Cover: Axis-Parallel Rectangles}

In this section, we explore a geometric hitting set problem where, $X$ is a ground set of points in a plane and $\cal S$ is a 
set of axis-parallel rectangles and the goal is to select a minimum cardinality subfamily $X'\subseteq X$, such that every axis-parallel rectangle in 
$\cal S$ contains one point from $X'$. 
We will firstly show the existence of small size $\epsilon$ -nets of size $O(1/\epsilon \hspace{1mm} log\hspace{1mm}log\hspace{1mm} 1/\epsilon)$ for
axis-parallel rectangles. We shall show this in two sub parts (1) Constructing $\epsilon$ -nets (2) Estimating the size of the $\epsilon$ -nets. 
Then we will use the Bronnimann and Goodrich result to show that, if the existence of the above sized net can be shown in polynomial time, then
solution of size $O(log\hspace{1mm}log \hspace{1mm} OPT)$ of the optimal solution, for the corresponding hitting set problem can be computed in polynomial time.
We conclude this section by giving the lower bound example and a key observation, that motivated our technique. 
\subsection{Constructing $\epsilon$ -nets}
Let us start by constructing a balanced binary tree (BBT) $\cal T$, over set $P$ of $n$ points in the plane. We build the tree based on the x-order of these $n$ points and we terminate 
the construction when every leaf node of the tree reaches to a size between the range $[n/r, n/2r]$, where $r = 2/\epsilon$. By size here we mean the number of points under
each leaf node. Since its a BBT, the number of nodes at some level $i$ will be $n/2^{i}$ and due to the termination condition we can say that, the maximum 
level $i_{max}$ of the tree is $n/2r = n/2^{i_{max}}$ i.e. $(1 + log\hspace{1mm}r)$ levels.
\begin{figure}[H]
\centering 
\includegraphics[scale = 0.60]{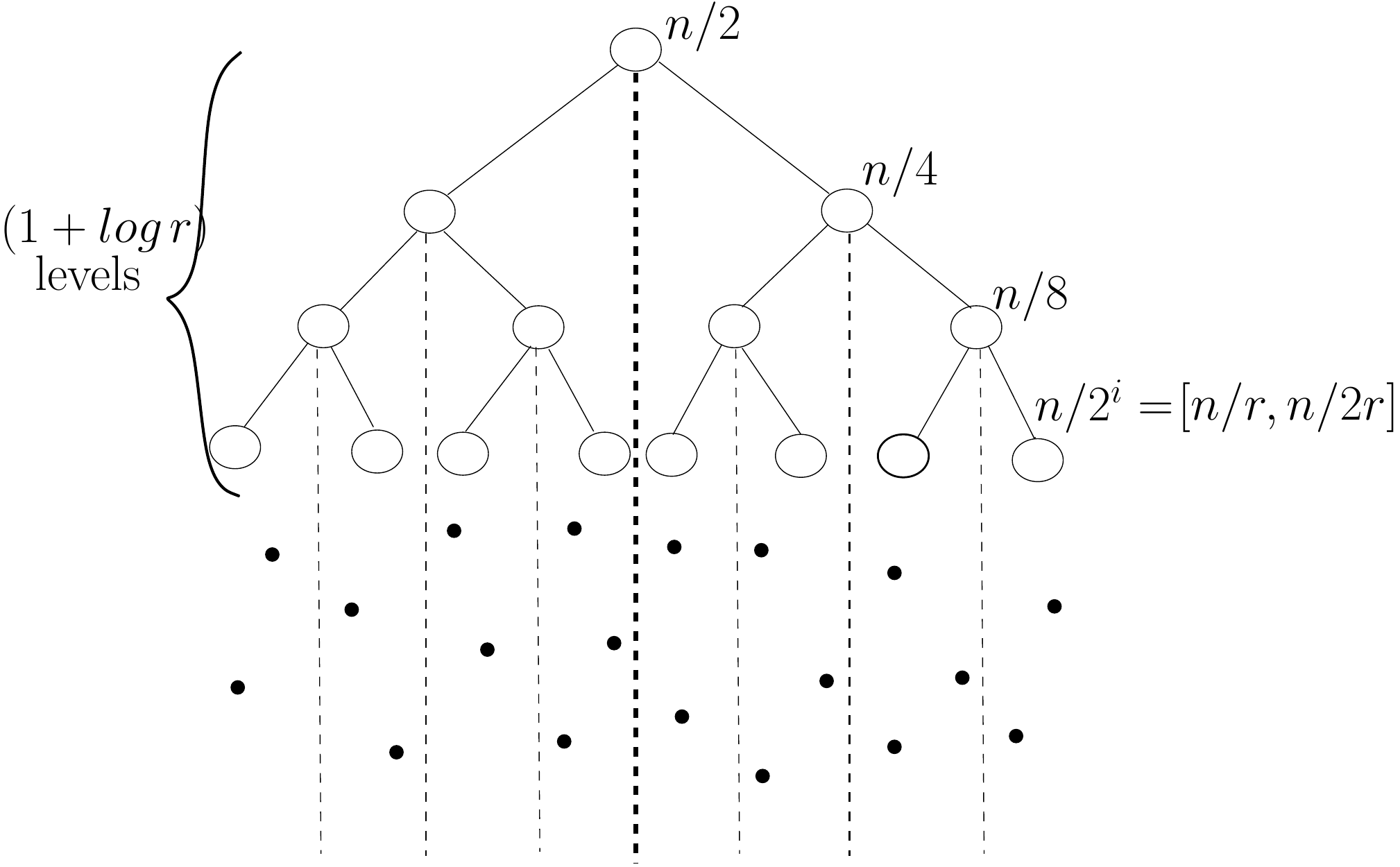}
\caption{\textit{Balanced binary tree, decomposing the point set $P$}}
\end{figure}

We randomly select  a set $R \subseteq P$, which will be included in the $\epsilon$ -net we construct, such that each point in $P$ that is included in $R$ is selected independently 
with probability $\pi = s/n$, thus $E[|R|] = s$. Set $s = cr\hspace{1mm}log\hspace{1mm}log\hspace{1mm}r$. In figure (7a), for each node $v$, let $P_v$ be the subset of the points in the 
sub-tree rooted at $v$. Let $l_v$ be the vertical line that divides $P_v$ into two subsets $P_{v_1}$ and $P_{v_2}$, which contains points in the sub-tree rooted at $v_1$ 
and $v_2$ respectively. Let the two half planes corresponding to $P_{v_1}$ and $P_{v_2}$ be denoted as $\sigma_{v_1}$ and $\sigma_{v_2}$. Let the plane which contains 
all the points in the sub-tree rooted at the root node $v$, be denoted as $\sigma_{root}$. Our sampling of set $R$ is unbiased, since $P_v$ is set before selecting $R$. \par
In figure (7b), $\sigma_{v}$ is the right (resp. left) portion of $\sigma_{u}$ bounded by $l_u$ on the left hand side (resp. right). Hence we define $l_u$ as the 
\textit{left entry side} (resp. \textit{right entry side}) of $\sigma_{v}$. Let $\overline Q$ be an axis-parallel rectangle that contains at least $\epsilon n$ points of $P$.
Let $u$ be the highest node in the $\cal T$ such that, the vertical line 
assigned to it cuts $\overline Q$ into 2 parts. Let $Q$ be one of the 2 parts, which has at least $\epsilon n/2 = n/r$ points from $P$. There has to exist one such partition.
Let $v$ be the child of $u$ that contains $Q \subset \sigma_v$, hence $Q$ is said to be \textit{anchored} at left entry side $l_u$ of $\sigma_v$.
For the rest of the section, without loss of generality we will assume that the entry side $l_u$ of $\sigma_v$ is the left side. 

\begin{figure}[H]
\centering 
\includegraphics[scale = 0.65]{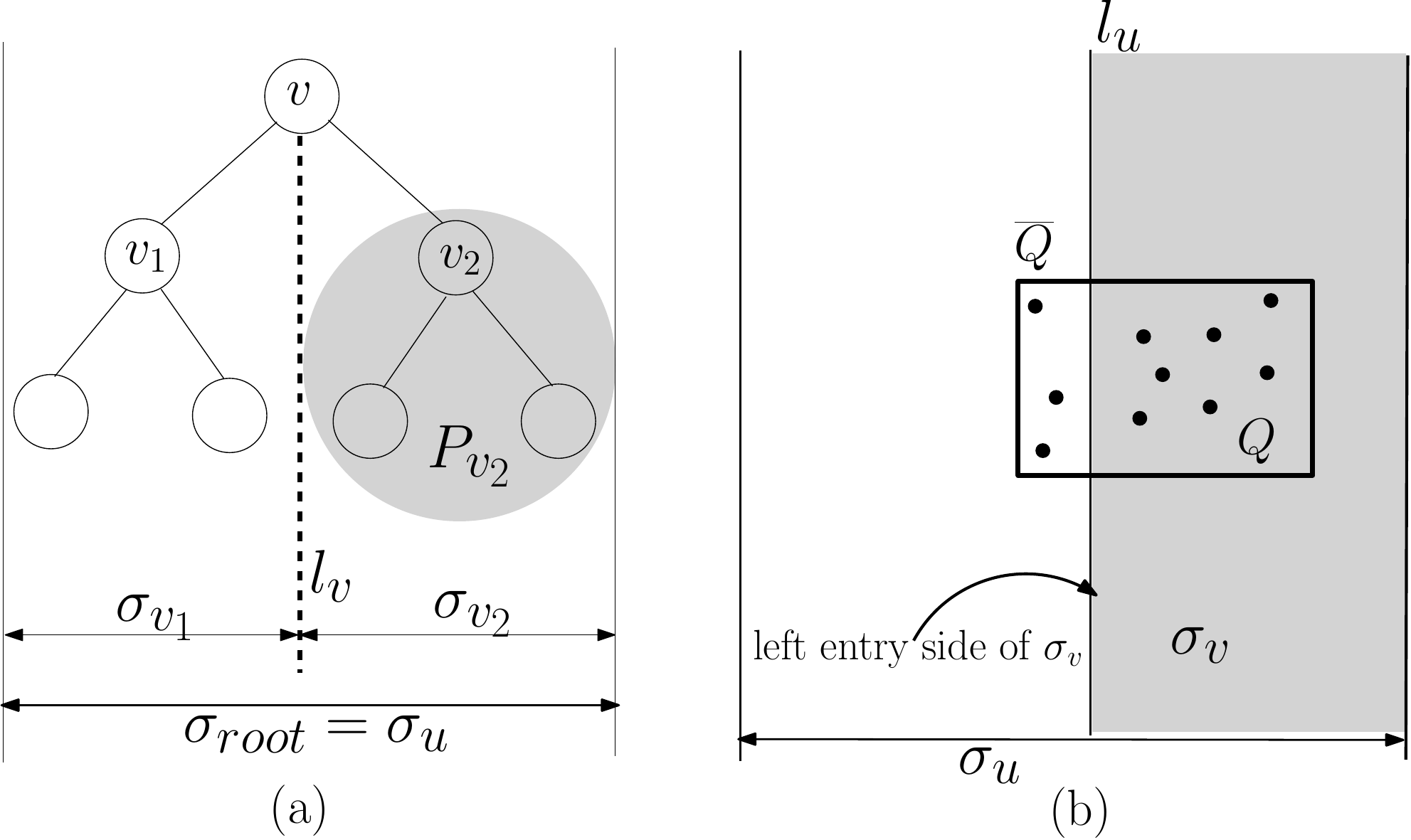}
\vspace{-3mm}
\caption{\textit{(a) $l_v$ splits $P_{v}$ into 2 subsets $P_{v_1}$ and $P_{v_2}$ in half plane $\sigma_{v_1}$ and $\sigma_{v_2}$ respectively 
(b) Half rectangle Q anchored at left entry side $l_u$ of strip $\sigma_v$}}
\end{figure}
\vspace{-1.5mm}
We want to stab all the heavy rectangles i.e. we want to construct a subset from $P$, that intersects every rectangle $\overline Q$ such that $|\overline Q| \geq \epsilon n$.
Since $Q \subset \overline Q$, if any point from our first random sample $R \subseteq P$ is contained in $Q$ we have stabbed $\overline Q$ successfully and we are done.
Let us assume that $Q$ does not contain a point from $R$, and hence we will say $Q$ is \textit{$R$-empty} or \textit{$R_v$-empty}. We define a set $\mathcal{M}_v$ for each node
$v \in \cal T$, that contains all the maximal anchored $R_v$-empty axis-parallel rectangles in $\sigma_v$. We shall then bound the size of $\mathcal{M}_v$. Note that for any $v\in \cal T$, 
$R_v$ is the set of points from our random subset $R$ that is contained in $\mathcal{M}_v$. 

\begin{figure}[H]
\centering 
\includegraphics[scale = 0.79]{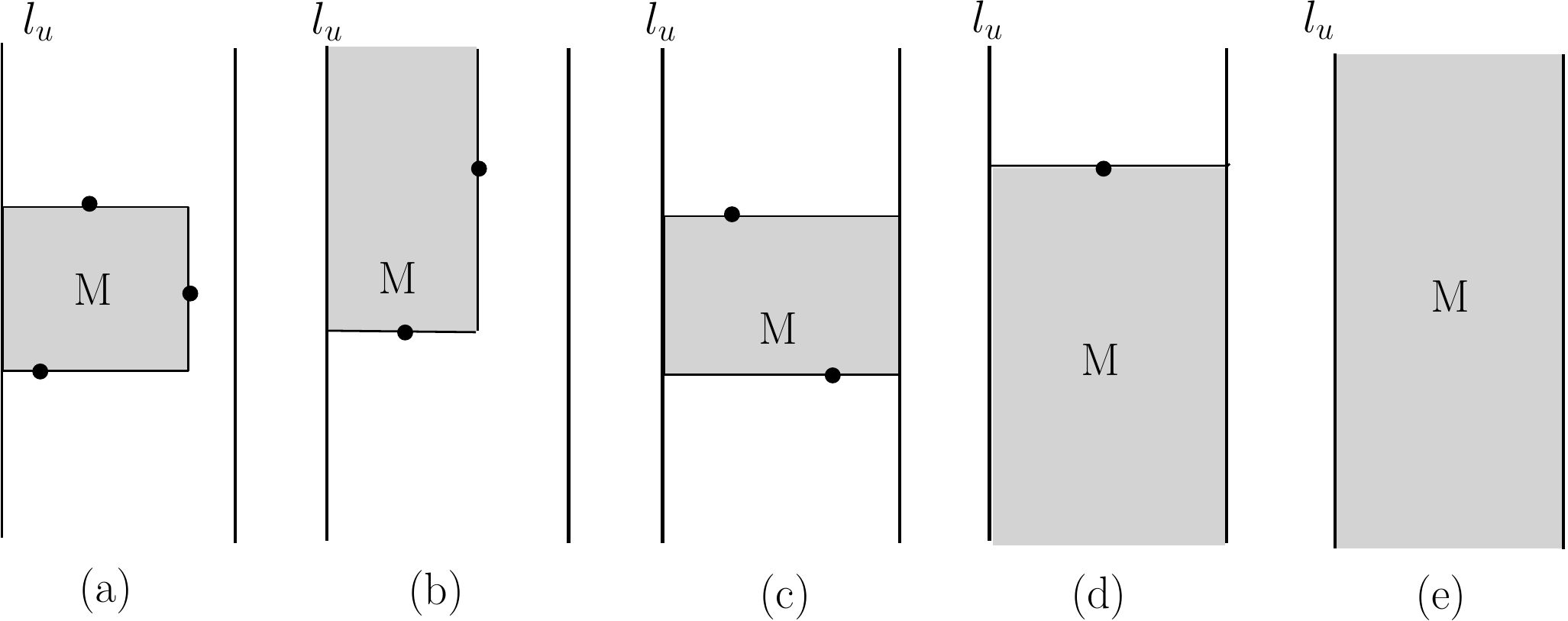}
\vspace{-3mm}
\caption{\textit{(a) Rectangle $M \in \mathcal{M}_v$ is defined by 3 points from $R_v$, one point on each of the 3 unanchored sides, $\mathcal{M}_v$ also contain degenerate rectangles with some
or all of these 3 points missing,
(b) and (c) $M$ with 2 points on the two unanchored sides each (its easy to see the missing case similar to case (b)), (d) $M$ with just one point on the unanchored side 
(either top-most or bottom-most), (e) $M$ with no points from $R_v$, in this case you just have the entire strip as one maximal $R_v$-empty rectangle.}}
\end{figure}

\begin{lemma}
 For each node $v \in \cal T$, the number of rectangles in the set $\mathcal{M}_v$ is equal to $(2r_v + 1)$, where $r_v = |R_v|$.
\end{lemma}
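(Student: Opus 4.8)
The plan is to count the rectangles in $\mathcal{M}_v$ by classifying each one according to how its \emph{right} side (one of the three unanchored sides) is determined, exploiting the fact that the left side is fixed on $l_u$. Set up coordinates so that $l_u$ is the line $x = 0$, let $\sigma_v$ be the region to its right bounded above, below, and on the right by the boundary of the strip (say the right boundary is the line $x = W$), and assume the points of $R_v$ are in general position, i.e. with distinct $x$- and $y$-coordinates; the coincidences depicted in Figure 8(b)--(d) then arise only as limiting cases. Every $M \in \mathcal{M}_v$ is a maximal $R_v$-empty anchored rectangle, and its right edge is blocked either by the right boundary of $\sigma_v$ or by a point of $R_v$ lying immediately to its right. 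These two cases are mutually exclusive and exhaustive, so it suffices to count each separately and add.

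For the first case I would consider the full-width rectangles $[0,W]\times[b,c]$. Such a rectangle is $R_v$-empty exactly when the open horizontal band $(b,c)$ contains no point of $R_v$, and it is maximal exactly when $b$ and $c$ are consecutive obstacles among the $y$-coordinates of the points of $R_v$ together with the top and bottom edges of $\sigma_v$. Sorting the $r_v$ points by height therefore yields precisely $r_v + 1$ maximal full-width rectangles, one per gap between consecutive heights, including the two extreme gaps that touch the top and bottom of the strip. When $r_v = 0$ this single band is the whole strip, recovering the base case of Figure 8(e).

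For the second case I would fix a point $p_j = (x_j, y_j) \in R_v$ and count the maximal rectangles whose right side is pinned by $p_j$. Any such rectangle has the form $[0,x_j]\times[b,c]$ with $y_j \in (b,c)$, and $R_v$-emptiness forces $(b,c)$ to avoid the height of every point with $x$-coordinate strictly less than $x_j$. The key step is that the maximal admissible band is \emph{unique}: among the obstacles formed by the top and bottom of $\sigma_v$ and the heights of the points lying to the left of $x_j$, the value $y_j$ sits in exactly one gap, and extending $[b,c]$ to that gap gives the one maximal rectangle pinned on the right by $p_j$. Hence each of the $r_v$ points contributes exactly one rectangle, giving $r_v$ in total. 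Combining the two cases yields $(r_v + 1) + r_v = 2r_v + 1$, as claimed.

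The main obstacle I anticipate is making this classification genuinely exhaustive and disjoint rather than merely plausible from the figures: one must check that every maximal anchored empty rectangle falls into exactly one case (its right side cannot be both finite and unpinned), that the rectangles produced across the two cases are pairwise distinct (a point-pinned rectangle has right edge $x_j < W$, whereas a full-width one has right edge $W$), and that the uniqueness-of-gap argument of the second case is not spoiled by ties. The cleanest way to dispose of the last point is the general-position assumption on $R_v$, after which the degenerate configurations of Figure 8(b)--(d) reappear as limits that leave the count unchanged.
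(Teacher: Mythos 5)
Your proof is correct and takes essentially the same approach as the paper: the paper likewise classifies the maximal $R_v$-empty anchored rectangles by whether a point of $R_v$ lies on the right unanchored side (giving $r_v$ rectangles, via the same injectivity-by-maximality argument you use) or the rectangle spans the full width of $\sigma_v$, counting the latter as $r_v - 1$ rectangles bounded by consecutive points in $y$-order plus $2$ semi-unbounded ones --- which is exactly your $r_v + 1$ gap count with the two extreme gaps listed separately. Your explicit general-position assumption and the uniqueness-of-gap argument for the point-pinned case are if anything slightly more careful than the paper's treatment of the degenerate configurations.
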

\begin{proof}
We shall cover all of the following, possible cases to show the required bound (see Figure 9). In case (a), when $q\in R_v$ lies on the right unanchored side 
of any rectangle $M \in \mathcal{M}_v$ then, no other rectangle in $\mathcal{M}_v$ can have $q$ on its right unanchored side. Let us argue this by contradiction. Assume that some other 
rectangle $M'\in \mathcal{M}_v$ also has $q$ on its right unanchored side, this will lead to $M' \subset M$, hence it is not a maximal rectangle which contradicts the definition of 
$\mathcal{M}_v$ else, $M'$ contains a point that belongs to $R_v$ which contradicts our assumption that $Q$ does not contain any point from $R$. Hence maximum number of such rectangles
is $r_v$.\par
In case (b), the points $q_1, q_2 \in R_v$ are on the top and bottom unanchored sides of rectangle $M$, then these points are consecutive in $R_v$ in y-coordinate, if they are not then
$M$ will to contain the points which lies between these two points in $R_v$ based on y-order, which is a contradiction. There are $r_v -1$ such pairs.
Finally, we are left with case (c), where there are 2 semi-unbounded rectangles, one bounded from above and one from below by their highest and lowest points respectively.
Thus from the above 3 cases we get our required bound of $|\mathcal{M}_v| = (2r_v + 1)$. It is also easy to see that, the bound holds when $r_v = 0$ where, $M$ is the entire strip and 
$r_v = 1$ where, we get $M$ as the entire plane and 2 semi-unbounded rectangles.  \par 
We are left to show that, $Q$ belongs to at least one rectangle in $\mathcal{M}_v$. Since we have assumed that the entry side of $\sigma_v$ is the left side, we expand $Q$ 
by pushing its right unanchored side till it touches a point of $R_v$ or reaches the boundary of $\sigma_v$. Then we extend it vertically in both the directions till it 
reaches a point in $R_v$ or we let it extend till $\pm \infty$. Thus, the rectangle we get after expansion contains $Q$ and also belongs to $\mathcal{M}_v$. This completes the proof.
\end{proof}

\begin{figure}[H]
\centering 
\includegraphics[scale = 0.82]{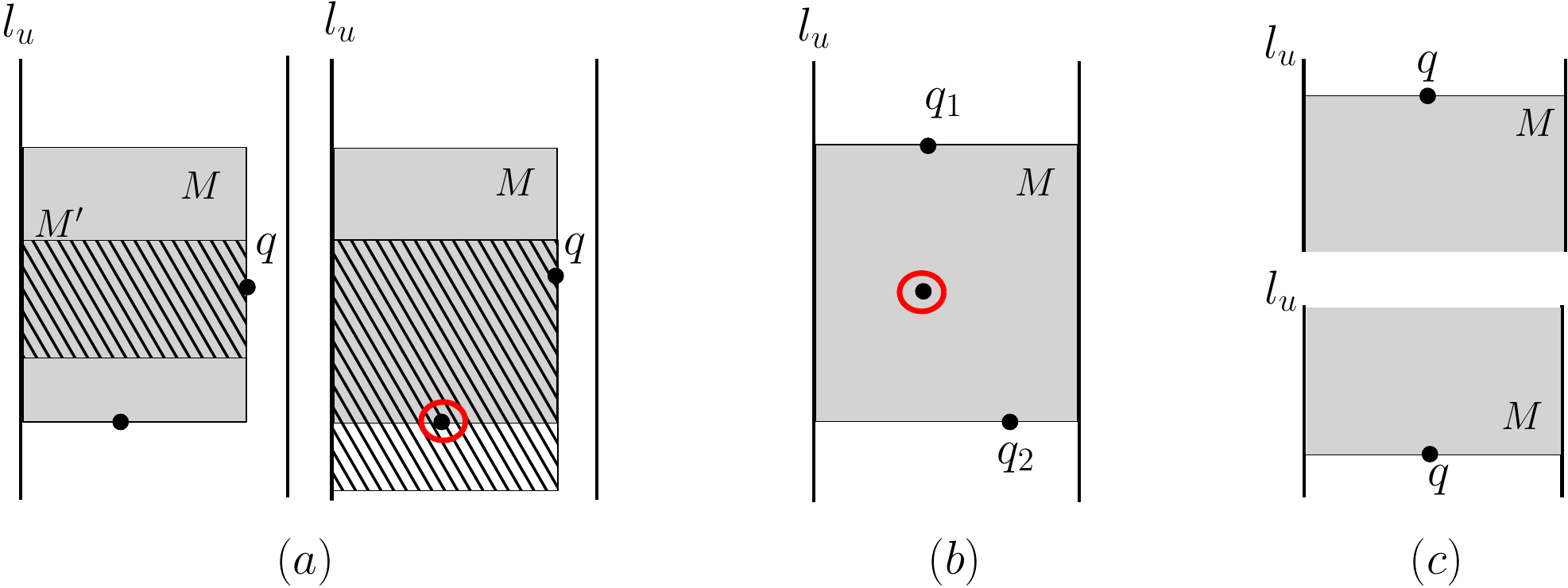}
\vspace{-1mm}
\caption{\textit{Contradicting cases for (a) $q\in R_v$ lies on the right unanchored side of exactly one $M \in \mathcal{M}_v$ 
(b) $q_1, q_2 \in R_v$ are on the top and bottom unanchored sides of $M \in \mathcal{M}_v$, then these points are consecutive in $R_v$ in y-coordinate, and 
(c) 2 semi-unbounded rectangles $M$}}
\end{figure}
For all nodes $v \in \cal T$, overall number of rectangles $M \in \mathcal{M}_v$ at a fixed level is $O(|R| + r')$, where $r'$ is the total number of nodes at a fixed level
and over all the levels of $\cal T$ is $O(|R|log\hspace{1mm}r + r'log\hspace{1mm}r)$ i.e. $O(|R|log\hspace{1mm}r + r)$.
For each $v \in \cal T$ and each $M \in \mathcal{M}_v$, let $t_M$ be the \textit{weight factor} of $M$ and it is defined as $s|M \cap P|/n$. The rectangles $M$ with 
$t_M < s/r = c\hspace{1mm}log\hspace{1mm}log\hspace{1mm}r$ i.e. $s|M \cap P|/n < s/r$ i.e. $|M \cap P| < n/r$ and these rectangles can be ignored since they
have no anchored rectangles $Q$ contained in them. This is because $|Q| \geq \epsilon n/2 = n/r$ points of P. We will just consider rectangles with 
$t_M \geq c\hspace{1mm}log\hspace{1mm}log\hspace{1mm}r$.\par
Using Haussler and Welzl result \cite{haussler_epsilon-nets_1987}, for each $M \in \mathcal{M}_v$ having $t_M \geq c\hspace{1mm}log\hspace{1mm}log\hspace{1mm}r$, there exists 
$N_M \subseteq M \cap P_v$ of size $k\hspace{0.5mm}t_M\hspace{1mm} log\hspace{1mm} t_M$, that forms $(1/t_M)$ -net for $M \cap P_v$, where $k$ is a constant.  
We are done with the construction phase of $\epsilon$ -nets. So the final $\epsilon$ -net is union of set $R$ that we picked in the first level of sampling, with 
the sets $N_M$, $\forall$ $v \in \cal T$ and $\forall$  $M \in \mathcal{M}_v$ with $t_M \geq c\hspace{1mm}log\hspace{1mm}log\hspace{1mm}r$, in the second level of sampling.

\begin{lemma}
 The constructed net $N = R \cup \sum_{v \in T} \sum_{M \in \mathcal{M}_v}N_M$ is an $\epsilon$ -net.
\end{lemma}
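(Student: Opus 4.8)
The plan is to verify the defining property of an $\epsilon$-net directly: for any axis-parallel rectangle $\overline{Q}$ with $|\overline{Q} \cap P| \ge \epsilon n$, I must exhibit a point of $N$ inside $\overline{Q}$. The construction already reduces this to the anchored half-rectangle. As set up just before the lemma bounding $|\mathcal{M}_v|$, there is a highest node $u$ whose splitting line cuts $\overline{Q}$, and one of the two parts $Q \subseteq \overline{Q}$ satisfies $|Q \cap P| \ge \epsilon n/2 = n/r$ and is anchored at the entry side $l_u$ of the strip $\sigma_v$, where $v$ is the appropriate child of $u$. Since $Q \subseteq \overline{Q}$, it suffices to place a point of $N$ inside $Q$.

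I would then split into the two cases that the two levels of sampling are designed to separate. First, if the first-level random sample $R$ contains a point of $Q$, then we are immediately done, because $R \subseteq N$. Otherwise $Q$ contains no point of $R$, i.e. $Q$ is $R_v$-empty, and this is exactly the regime handled by the second-level nets $N_M$.

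In the $R_v$-empty case I would invoke the final paragraph of the proof of the lemma bounding $|\mathcal{M}_v|$: expanding $Q$ rightward and then vertically until it meets points of $R_v$ (or the strip boundary) produces a maximal anchored $R_v$-empty rectangle $M \in \mathcal{M}_v$ with $Q \subseteq M$. Because $M \subseteq \sigma_v$, the points of $P$ it contains are precisely those counted by $P_v$, so $|M \cap P| = |M \cap P_v| \ge |Q \cap P| \ge n/r$, whence the weight factor satisfies $t_M = s\,|M \cap P|/n \ge s/r = c\,\log\log r$. Thus $M$ is one of the rectangles for which $N_M$ was actually built as a $(1/t_M)$-net of $M \cap P_v$. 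It remains to check that $Q$ is heavy enough inside $M$ to be stabbed by this net: the $(1/t_M)$-net threshold is $(1/t_M)\,|M \cap P_v| = n/s$ points, whereas $Q$ carries $|Q \cap P| = |Q \cap P_v| \ge n/r$ of them. Since $s = c\,r\,\log\log r \ge r$ for the relevant range of $r$, we have $n/r \ge n/s$, so $Q$ meets the density condition of the $(1/t_M)$-net and therefore contains a point of $N_M \subseteq N$, which lies in $\overline{Q}$ as required.

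The main obstacle is the final density comparison rather than the case split. Everything hinges on the identity $|M \cap P| = |M \cap P_v|$ (so that the weight factor and the net threshold refer to the same point count) together with the inequality $s \ge r$, which is precisely why the first-level sampling size was chosen to be $s = c\,r\,\log\log r$. Once these are in place the argument is bookkeeping: the sample $R$ catches every $\overline{Q}$ whose anchored part already contains a sampled point, and for every other heavy $\overline{Q}$ the anchored part $Q$ is trapped inside a single heavy maximal empty rectangle $M$ whose attached net $N_M$ is guaranteed to stab it.
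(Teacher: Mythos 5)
Your proof is correct and takes essentially the same route as the paper's: reduce to the anchored $R$-empty part $Q$, trap it inside a maximal rectangle $M \in \mathcal{M}_v$ via the expansion argument, and invoke the Haussler--Welzl $(1/t_M)$-net property, where your threshold comparison $n/r \ge n/s$ is just a rephrasing of the paper's ratio bound $|Q \cap P|/|M \cap P| \ge c\log\log r/t_M \ge 1/t_M$. Your write-up is somewhat more explicit than the paper's (the case split on whether $R$ already stabs $Q$, the verification that $t_M \ge c\log\log r$ so that $N_M$ was actually constructed, and the identity $|M \cap P| = |M \cap P_v|$), but the underlying argument is identical.
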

\begin{proof}
Since our final $\epsilon$ -net $N$ contains all the points in $R$, it is enough for us to show that for any $R-empty$ rectangle $Q$ containing at least $n/r$
points of $P$, for any $M \in \mathcal{M}_v$ containing $Q$; $Q \cap N_M \neq \phi$.
\begin{equation}
 \dfrac{|Q \cap P|}{|M \cap P|} \geq \dfrac{n/r}{nt_M/s} = \dfrac{s}{rt_M} = \dfrac{c\hspace{1mm}log\hspace{1mm}log\hspace{1mm}r}{t_M} \geq \dfrac{1}{t_M}
\end{equation}

The first inequality above comes from the definition of weight factor. The above equation says that, rectangle $Q$ contains at least $(1/t_M)$ fraction of points in $P$
contained in $M$.
The result we got from the Haussler and Welzl paper, that $N_M$ is $(1/t_M)$ -net of size $k\hspace{0.5mm}t_M\hspace{1mm} log\hspace{1mm} t_M$ for $M \cap P$, which means if
we pick any $(1/t_M)$ points from $M \cap P$, it will surely contain a point from set $N_M$ and $|Q \cap P|$ is at least $(1/t_M)$ of $|M \cap P|$. Thus it follows
that $Q \cap N_M \neq \phi$. This completes the proof.
\end{proof}

\subsection{Estimating the size of N}
The expected size of $N$ is 
\begin{equation}
E[|R| + k \sum_{v} \sum_{\substack{{M \in \mathcal{M}_v}\\ {t_M \geq c\hspace{1mm}log\hspace{1mm}log\hspace{1mm}r}}} t_M\hspace{1mm} log\hspace{1mm} t_M]
\end{equation}
$$ = cr\hspace{1mm}log\hspace{1mm}log\hspace{1mm}r + k \sum_{v} \sum_{\substack{{M \in \mathcal{M}_v}\\ {t_M \geq c\hspace{1mm}log\hspace{1mm}log\hspace{1mm}r}}} t_M\hspace{1mm} log\hspace{1mm} t_M]$$

We shall now fix a level $i$ in the tree $\cal T$ for the rest of the analysis. Since $\cal T$ is a balanced binary tree, for each node $v$ at level $i$, $|P_v| = n/2^i$.
Let the union of the collections of all rectangles $M \in \mathcal{M}_v$, for all the nodes $v$ at level $i$ be denoted by $CT(R)$. Let $CT_t(R) \subseteq CT(R)$ be rectangles
with $t_M \geq t$ where, $t > 0$. Let $R'$ be another random sample from $P$, such that each point from $P$ is chosen in $R'$ independently with probability $\pi'=\pi/t$.
Let $\cal C$ be a set of those rectangles $M$, anchored at the entry side of $\sigma_v$ for all the nodes at level $i$ and has one point of $P$ on each of its 3 unanchored sides.
Note that, all the degenerate rectangles can be handled in analogous manner. For a rectangle $M \in \cal C$, we denote its \textit{defining set} $D(M)$ by the 3 points on its 
unanchored sides and its \textit{killing set} $K(M)$ by the points that lie in the interior of $M$ and belong to $P$. 
We shall now define an axiom, which trivially holds true by our construction and assumption that, no point of $R$ is contained in $Q$ and thus not contained in $\mathcal{M}_v$.

\begin{axiom}
 Every rectangle $M \in \cal C$ belongs to the set $CT(R)$ iff $D(M) \subseteq R$ and $(K(M) \cap R) = \phi$.
\end{axiom}
We need the following exponential decay lemma \cite{Agarwal:1994},\cite{aronov2010small} for our analysis.

\begin{lemma}
 $E[|CT_t(R)|] = O(2^{-t} E[|CT(R')|])$
\end{lemma}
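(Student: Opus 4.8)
The plan is to exploit the \textsc{Axiom} in order to decouple the combinatorial structure of the rectangles from the randomness of the sample. The crucial observation is that the family $\mathcal{C}$ of candidate rectangles -- those anchored at the entry side with one point of $P$ on each of their three unanchored sides -- is defined purely in terms of $P$ and does not depend on the sample $R$. Hence I would first enumerate over this fixed family and, for each $M \in \mathcal{C}$, use its defining set $D(M)$ (with $|D(M)| = 3$) and killing set $K(M)$. By the \textsc{Axiom}, $M$ contributes to $CT(R)$ exactly when all three points of $D(M)$ are sampled and no point of $K(M)$ is sampled; since these two sets are disjoint and the points are included in $R$ independently with probability $\pi$, linearity of expectation gives
\begin{equation}
 E[|CT_t(R)|] = \sum_{\substack{M \in \mathcal{C} \\ t_M \geq t}} \pi^{3}(1-\pi)^{|K(M)|}, \qquad E[|CT(R')|] = \sum_{M \in \mathcal{C}} (\pi')^{3}(1-\pi')^{|K(M)|},
\end{equation}
where $\pi' = \pi/t$. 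The degenerate rectangles (fewer than three defining points) are handled identically, with the exponent $3$ replaced by the appropriate smaller value, so I would treat the non-degenerate case and remark that the rest is analogous.

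Next I would bound the ratio of the two per-rectangle probabilities for a rectangle with $t_M \geq t$. Recalling that $t_M = s\,|M \cap P|/n = \pi\,|M \cap P|$, the expected number of sample points inside $M$ is exactly $t_M$, and since $|K(M)| = |M \cap P| - 3$ we have $\pi|K(M)| = t_M - 3\pi \geq t - 3\pi$. Dividing the two summands produces the factor $\pi^{3}/(\pi')^{3} = t^{3}$, while the ratio of the surviving $(1-\cdot)^{|K(M)|}$ terms is controlled by
\begin{equation}
 \left(\frac{1-\pi}{1-\pi/t}\right)^{|K(M)|} \leq e^{-\pi(1 - 1/t)|K(M)|} \leq e^{-(1-1/t)(t - 3\pi)},
\end{equation}
using $\tfrac{1-\pi}{1-\pi/t} \leq e^{-(\pi - \pi/t)}$. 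Since $\pi$ is negligible this is $O(e^{-t})$, so each summand on the left is at most $O(t^{3} e^{-t})$ times the corresponding summand on the right.

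Finally, since $t^{3} e^{-t} = t^{3}(2/e)^{t}\,2^{-t}$ and $\sup_{t \geq 1} t^{3}(2/e)^{t}$ is a constant, the per-rectangle factor is $O(2^{-t})$ uniformly in $M$. Because every term is non-negative, I would bound each term of the restricted sum and then enlarge its domain back to all of $\mathcal{C}$, obtaining
\begin{equation}
 E[|CT_t(R)|] \leq O(2^{-t}) \sum_{M \in \mathcal{C}} (\pi')^{3}(1-\pi')^{|K(M)|} = O(2^{-t})\, E[|CT(R')|].
\end{equation}
I expect the main obstacle to be the second step: one must verify that the polynomial factor $t^{3}$ arising from the reduced sampling probability $\pi' = \pi/t$ is genuinely dominated by the exponential gain $e^{-t}$ obtained from the constraint $t_M \geq t$, and that the small additive corrections (the three boundary points, the $3\pi$ term, and the inequality $1-x \le e^{-x}$) do not spoil the clean $2^{-t}$ decay. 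Getting the direction of each inequality right, and justifying that the domain of the sum may be safely enlarged on the right-hand side, is where the care is needed rather than in any deep idea.
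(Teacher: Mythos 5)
Your proposal is correct and follows essentially the same route as the paper: enumerate the sample-independent family of anchored candidate rectangles, use the axiom together with independence to write each membership probability as $\pi^{\rho}(1-\pi)^{w}$, bound the per-rectangle ratio by $t^{\rho}\left(\frac{1-\pi}{1-\pi'}\right)^{w} = O(2^{-t})$, and enlarge the sum's domain on the right-hand side. In fact your write-up is somewhat more careful than the paper's, which asserts the final ratio bound $t^{\rho}\left(\frac{1-\pi}{1-\pi'}\right)^{w}=O(2^{-t})$ without spelling out the $1-x\le e^{-x}$ step or the absorption of the polynomial factor $t^{3}$ into the exponential decay.
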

\begin{proof}
Let $\cal Z$ be a collection of all axis-parallel rectangles anchored at the entry side of $\sigma_v$, at a fixed level $i$. Each unanchored sides of
these rectangles has one point of $P_v$ (degenerate cases can be handled the same way). Let $\mathcal{Z}_t \subseteq \cal Z$, contain all the rectangles with $t_M \geq t$. Since 
$CT_t(R) \subseteq CT(R)$ and since we are summing over all the $M \in (\mathcal{Z}_t \subseteq \cal Z)$ we can write,
\begin{equation}
 E[|CT_t(R)|] = \sum_{M \in \mathcal{Z}_t}Pr\{M \in CT(R)\}
\end{equation}
\begin{equation}
E[|CT(R')|] = \sum_{M \in \cal Z}Pr\{M \in CT(R')\} \geq \sum_{M \in \mathcal{Z}_t}Pr\{M \in CT(R')\}
\end{equation}
The above inequality comes from the fact that, $\mathcal{Z}_t \subseteq \cal Z$.
We are done, if we show that for each $M \in \mathcal{Z}_t$; $$\dfrac{Pr\{M \in CT(R)\}}{Pr\{M \in CT(R')\}} = O(2^{-t})$$ \par
Let $A$ be the event that, $D(M) \subseteq R$ and $K(M) \cap R = \phi$ and $A'$ be the event that, $D(M) \subseteq R'$ and $K(M) \cap R' = \phi$. From axiom 1 and the fact that
all the rectangles of $M \in C$, we can say $A$ is the event that $M \in CT(R)$ and $A'$ is the event that $M \in CT(R')$. Let us now denote $\rho = |D(M)| \leq 3$ and $w = |K(M)|$.
A point from point set $P$ is included in $R$ with probability $\pi$ and in $R'$ with probability $\pi'$. Hence $Pr\{A\} = \pi^{\rho}(1-\pi)^w$ and
$Pr\{A'\} = \pi'^{\rho}(1-\pi')^w$. $\pi = s/n$, $\pi' = \pi/t$ and $w\geq n/r = tn/s$. Hence,
$$\dfrac{Pr\{M \in CT(R)\}}{Pr\{M \in CT(R')\}} = \dfrac{Pr\{A\}}{Pr\{A'\}} = \dfrac{\pi^{\rho}(1-\pi)^w}{\pi'^{\rho}(1-\pi')^w} = t^\rho \Big(\dfrac{1-\pi}{1-\pi'}\Big)^w = O(2^{-t})$$
This concludes the proof. 
\end{proof}

We shall now apply Lemma 6, by substituting $t = c\hspace{1mm}log\hspace{1mm}log\hspace{1mm}r$. For all $v$, the collection of all the maximal $R'_v-empty$ rectangles
anchored on their entry side, in their corresponding strips $\sigma_v$, at a fixed level $i$, is denoted by $CT(R')$. Applying Lemma 4, on all the nodes of level $i$ we get, 
\begin{equation}
 E[|CT(R')|] = \sum_{v}(2r'_v + 1)
\end{equation}
where, $R'_v = R' \cap \sigma_v$ and $r'_v = |R'_v|$. Since all the nodes at a particular level of a balanced binary tree are disjoint follows, sets $R'_v$ at level $i$ are
also disjoint. So we get, 
\begin{equation}
 \sum_{v}r'_v = |R'|
\end{equation}
Since we terminate the construction of the balanced binary tree when the size of every leaf node ranges between $[n/r, n/2r]$, we can say that the
maximum number of nodes at level $i_{max}$ ($i_{max}$ is the level which has the maximum nodes) is at most
\begin{equation}
\dfrac{n}{2^{i_{max}}} = \dfrac{n}{2r} \implies 2^{i_{max}} = 2r 
\end{equation}
Substituting equation (9) and (10) in (8) we get 
\begin{equation}
E[|CT(R')|] = \sum_{v}(2r'_v + 1) \leq 2|R'| + \sum_{v} 1 = 2|R'| + 2r = O(r)
\end{equation}
Substituting the value of $E[|CT(R')|]$ in lemma 4, we get
\begin{equation}
 E[|CT_t(R)|] = O(2^{-t} E[|CT(R')|]) = O(2^{-c\hspace{1mm}log\hspace{1mm}log\hspace{1mm}r}r) = O\Big(\dfrac{r}{log^c\hspace{1mm}r}\Big)
\end{equation}
Making it general for any $j\geq t$, we can say $E[|CT_j(R)|] = O(r/2^j)$.  Now we will calculate the number of nodes from a fixed level $i$, that contributes to the size 
of the net $N$. Let us calculate the second part of equation (5).
\begin{equation}
 E[\sum_{v\hspace{1mm}at\hspace{1mm}level\hspace{1mm}i} \sum_{\substack{{M \in \mathcal{M}_v}\\ {t_M \geq t}}} t_M\hspace{1mm} log\hspace{1mm} t_M]
\end{equation}

\begin{align*}
&= E[\sum_{j \geq t} \hspace{1mm} \sum_{\substack{{M \in CT(R)}\\ {t_M = j}}} j\hspace{1mm} log\hspace{1mm} j]\\
&= E[\sum_{j \geq t} j\hspace{1mm} log\hspace{1mm} j(|CT_j(R)| - |CT_{j+1}(R)|)]\\
&= E[(t\hspace{1mm}log \hspace{1mm}t)|CT_t(R)| + \sum_{j > t} (j\hspace{1mm} log\hspace{1mm}j - (j - 1)log(j-1))|CT_j(R)|]\\
&= O\Big(\dfrac{r}{log^{c}\hspace{1mm}r}(t\hspace{1mm}log \hspace{1mm}t) + \sum_{j>t}\dfrac{r}{2^j}log j\Big)\\
&= O\Big(\dfrac{rt\hspace{1mm}log \hspace{1mm}t}{log^{c}\hspace{1mm}r}\Big)\\
&= O\Big(\dfrac{r\hspace{1mm}log\hspace{1mm}log\hspace{1mm}r\hspace{1mm}log\hspace{1mm}log\hspace{1mm}log\hspace{1mm}r}{log^c \hspace{1mm}r}\Big)
\end{align*}

Since this is just for a fixed level $i$, so we repeat this analysis for all the $(1 + log\hspace{1mm}r)$ levels,
$$= O\Big(\dfrac{r\hspace{1mm}log\hspace{1mm}log\hspace{1mm}r\hspace{1mm}log\hspace{1mm}log\hspace{1mm}log\hspace{1mm}r}{log^{c-1} \hspace{1mm}r}\Big)$$
Using the above result in equation (5) we get 
\begin{equation}
 E[|N|] = \Big(cr\hspace{1mm}log\hspace{1mm}log\hspace{1mm}r + \dfrac{r\hspace{1mm}log\hspace{1mm}log\hspace{1mm}r\hspace{1mm}log\hspace{1mm}log\hspace{1mm}log\hspace{1mm}r}{log^{c-1}\hspace{1mm}r}\Big) = O(r\hspace{1mm}log\hspace{1mm}log\hspace{1mm}r)
\end{equation}

\begin{theorem}
 For axis-parallel rectangles, there exists an $\epsilon$ -net ($\epsilon > 0$) of size $O(\dfrac{1}{\epsilon}\hspace{1mm}log\hspace{1mm}log \hspace{1mm} \dfrac{1}{\epsilon})$, for any point set 
 $P$ of n-points in a plane.
\end{theorem}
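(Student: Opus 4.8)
The plan is to obtain the theorem by combining the two facts already established for the two-level random construction. Recall we set $r = 2/\epsilon$, draw the first sample $R\subseteq P$ with $\pi = s/n$ and $s = cr\log\log r$, form the maximal anchored $R_v$-empty rectangle families $\mathcal{M}_v$, and output $N = R \cup \bigcup_{v\in\mathcal{T}}\bigcup_{M\in\mathcal{M}_v} N_M$, where each $N_M$ is a $(1/t_M)$-net of $M\cap P_v$ of size $k\,t_M\log t_M$ furnished by Haussler and Welzl. The first ingredient is correctness: the lemma showing that $N$ is an $\epsilon$-net argues that every rectangle $\overline{Q}$ with at least $\epsilon n$ points is stabbed, either because its heavy half $Q$ already meets $R$, or, when $Q$ is $R$-empty, because $Q$ lies inside some $M\in\mathcal{M}_v$ and inequality (4) forces $Q$ to contain a $(1/t_M)$-fraction of $M\cap P$, hence a point of $N_M$. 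I would emphasize that this argument holds for \emph{every} realization of the random sample $R$ and every valid choice of the subnets $N_M$; the $\epsilon$-net property is therefore deterministic and does not depend on a favourable draw. The second ingredient is the size estimate: the telescoping computation culminating in equation (16), built on the combinatorial bound $|\mathcal{M}_v| = 2r_v+1$ and the exponential-decay lemma, gives $E[|N|] = O(r\log\log r)$.

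With these in hand, the theorem follows by the probabilistic method. Since $|N|$ is a nonnegative random variable that is a function of the single random sample $R$ (once $R$ is fixed, the families $\mathcal{M}_v$, the weights $t_M$, and the bounds $|N_M|\le k\,t_M\log t_M$ are all determined), there must exist at least one realization of $R$ for which $|N|$ is at most its expectation, i.e. $|N| = O(r\log\log r)$. For that realization $N$ is simultaneously a genuine $\epsilon$-net, by the deterministic correctness just noted, and of the claimed size. Finally I would substitute $r = 2/\epsilon$: since $\log\log(2/\epsilon) = \Theta(\log\log(1/\epsilon))$, the bound $O(r\log\log r)$ becomes $O\big(\tfrac{1}{\epsilon}\log\log\tfrac{1}{\epsilon}\big)$, which is exactly the asserted net size for any point set $P$ of $n$ points in the plane.

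The genuinely hard work has already been discharged in the preceding lemmas — the tight count $|\mathcal{M}_v| = 2r_v+1$ of maximal anchored empty rectangles, the exponential-decay lemma relating $E[|CT_t(R)|]$ to $E[|CT(R')|]$ through the defining/killing-set probability ratio, and the per-level telescoping sum over $t_M\log t_M$ that absorbs the $\log^{c-1} r$ denominator. Within the theorem itself the only delicate point I must not skip is the observation that averaging the size alone suffices: because the $\epsilon$-net property never fails, I avoid having to union-bound over a ``construction fails'' event and then balance that failure probability against the size, which is the usual source of difficulty in net-existence proofs. The remaining step is the routine asymptotic substitution $r = 2/\epsilon$.
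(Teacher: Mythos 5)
Your proposal is correct and follows essentially the same route as the paper: it assembles the deterministic correctness lemma (the constructed $N$ is always an $\epsilon$-net) with the expected-size bound $E[|N|] = O(r\log\log r)$ and substitutes $r = 2/\epsilon$. The only difference is that you make explicit the first-moment step (some realization of $R$ achieves $|N| \le E[|N|]$, and correctness never fails, so no failure-probability bookkeeping is needed), which the paper leaves implicit when passing from equation (14) to the theorem.
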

 Now using the result by Bronnimann and Goodrich \cite{bronnimann1995almost}, we come up with the following theorem,
\begin{theorem}
 For the hitting set problem of axis parallel rectangles, there exists a polynomial time algorithm which achieves a solution of size 
 $O(log\hspace{1mm}log\hspace{1mm}OPT)$ of the optimal solution.  
\end{theorem}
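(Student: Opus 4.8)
The plan is to obtain this theorem as an immediate consequence of the $\epsilon$-net bound just established---that for axis-parallel rectangles every $n$-point set admits an $\epsilon$-net of size $O(\tfrac{1}{\epsilon}\log\log\tfrac{1}{\epsilon})$---combined with the net-finder-to-approximation reduction of Bronnimann and Goodrich \cite{bronnimann1995almost}. Recall that this reduction guarantees: if for every weighted point set one can compute in polynomial time an $\epsilon$-net of size $O(\tfrac{1}{\epsilon}\,h(\tfrac{1}{\epsilon}))$, then the corresponding hitting set problem admits a polynomial-time $O(h(OPT))$-approximation. I would first read off that here one may take $h(\cdot)=\log\log(\cdot)$, and then verify that the randomized construction given above can be turned into a genuine polynomial-time net finder.

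Concretely, I would instantiate the Bronnimann--Goodrich iterative reweighting. Guess $c=OPT$ by trying $c=1,2,4,\dots$ and keeping the smallest value that succeeds, which costs only an extra logarithmic number of runs. For a fixed guess $c$, set $\epsilon=1/(2c)$ and maintain integer multiplicities on the points of $X$, all initialized to $1$. In each round I invoke the net finder on the multiplicity-weighted point set to obtain a candidate $N$ of size $O(\tfrac{1}{\epsilon}\log\log\tfrac{1}{\epsilon})=O(c\log\log c)$. If $N$ stabs every rectangle of $\cal S$ the algorithm halts and outputs $N$; otherwise I locate a rectangle $\overline{Q}$ containing no point of $N$ and double the multiplicities of all points of $X$ lying inside $\overline{Q}$.

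Correctness and termination follow from the standard potential argument, which I would carry out next. Since $\overline{Q}$ is unhit by the net, its total multiplicity is below an $\epsilon$-fraction of the global multiplicity, so each doubling multiplies the global multiplicity by at most $(1+\epsilon)$; meanwhile any optimal hitting set of size $c$ must place a point inside $\overline{Q}$, so at each round some fixed optimal point has its multiplicity doubled. Balancing the at-most-exponential growth of the global multiplicity against the forced doubling along an optimal point bounds the number of rounds by $O(c\log(n/c))$, which is polynomial. Upon halting we have a valid hitting set of size $O(c\log\log c)=O(OPT\log\log OPT)$, yielding the claimed $O(\log\log OPT)$ approximation ratio.

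The main obstacle, and the step deserving the most care, is bridging the gap between the \emph{existence} statement proved above and the \emph{weighted}, \emph{polynomial-time} net finder that the reduction actually consumes. The construction above is randomized, bounds only $\E[|N|]$, and is phrased for an unweighted point set. I would handle weights by replacing each point of integer multiplicity $w$ with $w$ parallel copies; because multiplicities at most double per round over polynomially many rounds they stay polynomially bounded, so the simulated instance remains of polynomial size. To convert the expected-size guarantee into an honest net finder, I would repeat the sampler $O(\log n)$ times and, using Markov's inequality together with a direct check that the returned set is in fact a net, keep a run whose size is within a constant factor of the expectation; this drives the failure probability below any inverse polynomial and gives a Las Vegas polynomial-time algorithm producing a net of the required size with high probability, exactly the object the Bronnimann--Goodrich framework requires.
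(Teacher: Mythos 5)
Your proposal is correct and takes essentially the same route as the paper: the paper also derives this theorem by applying the Bronnimann--Goodrich reduction to the preceding $\epsilon$-net existence theorem (in the paper this is done by bare citation, whereas you unpack the iterative-reweighting machinery and the conversion of the randomized expected-size construction into a polynomial-time weighted net finder, which the paper leaves implicit). One small caution: your justification that multiplicities stay polynomial \emph{because} they ``at most double per round over polynomially many rounds'' is not valid as stated (that alone would permit $2^{\mathrm{poly}(n)}$ growth); the polynomial bound instead follows from the $(1+\epsilon)$ per-round growth of the total weight combined with the $O(c\log n)$ round bound, an argument you already carry out correctly in your termination analysis.
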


\subsection{Key idea}
It is important to see that, if we make a bad choice of sample $R$ in the first level of sampling, the maximum number of $R-empty$ rectangles can be $\theta(s^2)$. It can be seen in 
the quadratic lower bound example in Figure (10a). Here each point of the lower staircase is matched with its corresponding point in upper staircase. Hence we use 
the tree decomposition technique over the balanced binary tree we constructed, to prune most of those rectangles, such that we are left with $O(s\hspace{1mm}log \hspace{1mm}r)$. 
In Figure (10b), consider the points to the left of $l_u$, we show that the number of maximal $R-empty$ rectangles anchored at $l_u$ is linear in $|R \cap \sigma_v|$,
for all $v \in \cal T$.
 
\begin{figure}[H]
\centering 
\includegraphics[scale = 0.70]{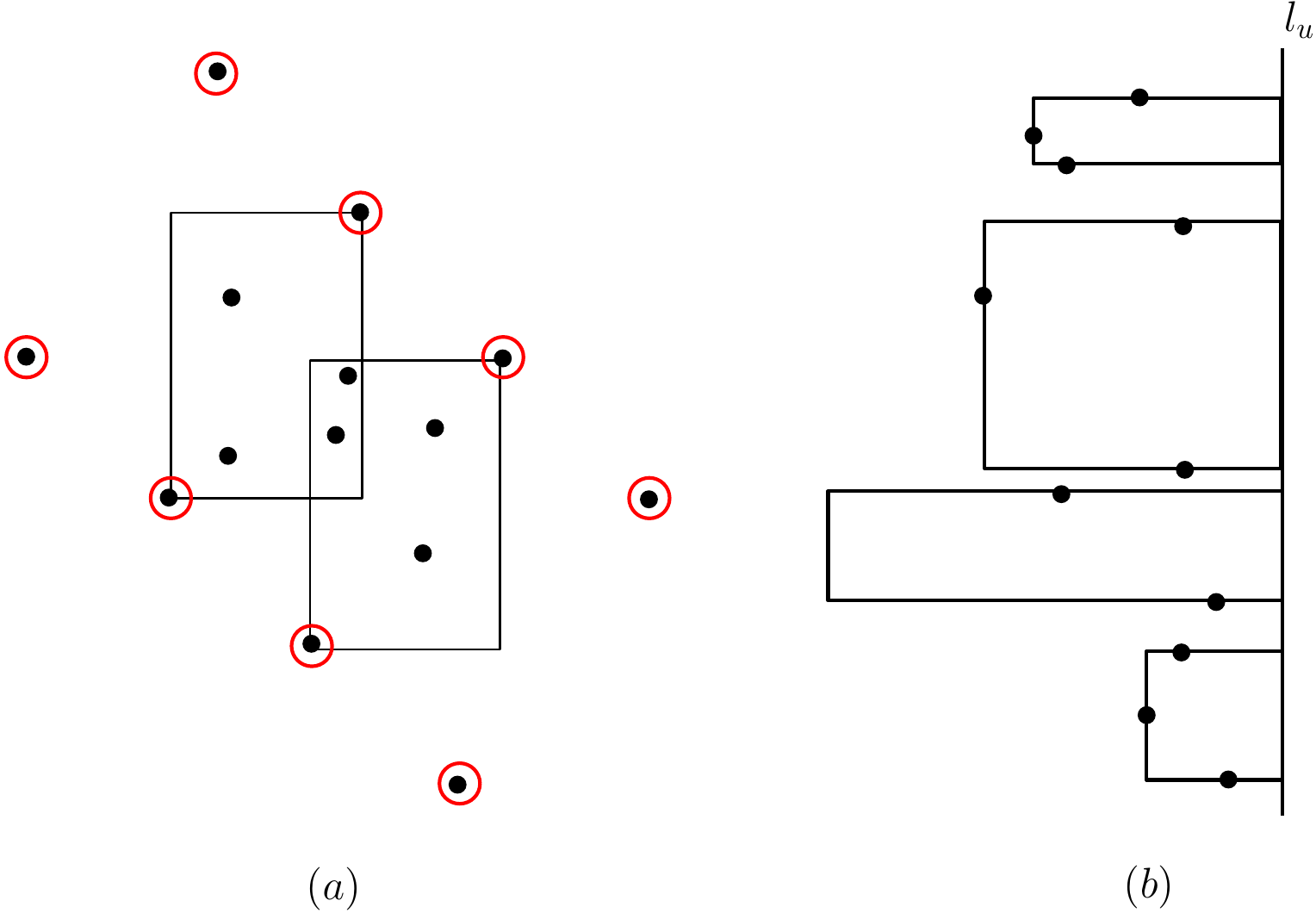}
\vspace{-1mm}
\caption{\textit{(a) Quadratic lower bound example (b) Decomposition of the point set into canonical subsets}}
\end{figure}
\section{Conclusion}
The first part of this report describes the following result that, logarithmic approximation factor for hard
capacitated set cover can be achieved from Wolsey's work \cite{wolsey1982analysis}, using a simpler and more intuitive analysis. We further show in our work, that $O(log \hspace{1mm} n)$ approximation factor can be achieved for the same problem by applying analysis of general set cover to analyze Wolsey's algorithm. This work is based on the key observation that we make in Lemma 3. The second part of the report describes the geometric hitting set problem, where X is a ground set of points in a plane and S is a set of axis parallel rectangles. It is shown that $\epsilon$-nets of size O($1/\epsilon \hspace{1mm} log log \hspace{1mm} 1/\epsilon$) can be computed in polynomial time. Applying Bronnimann and Goodrich result \cite{bronnimann1995almost} gives the hitting set of size $O(log log \hspace{1mm}OPT)$ for this problem.
One open problem is to consider the dual version of the geometric hitting set problem, described in this
report. Namely, given a collection S of n axis parallel rectangles and each rectangle in the subset S contains some point in the plane. We want to show the existence of a small size $\epsilon$-net i.e $\mathcal{S'} \subseteq \mathcal{S}$ whose union contains all the points contained in $\epsilon n$ rectangles of S. It is not known whether the same method \cite{aronov2010small} can also be extended to the dual version. One of the future directions that I am pursuing, is to apply the notion of hard capacities of combinatorial set cover, to the geometric set cover problem.

\bibliographystyle{plain}
\bibliography{My}

\end{document}